\documentclass[journal,10pt]{IEEEtran}

\usepackage{amssymb,amsmath,amsthm}

\usepackage{url}

\usepackage{tikz}
\usetikzlibrary{calc}
\usetikzlibrary{positioning,circuits,circuits.logic,circuits.logic.US}
\usepackage{stackrel}

\usepackage{color}

\usepackage{graphicx}
\usepackage[english]{babel}
\usepackage{amsfonts}
\usepackage{booktabs}
\usepackage[noend]{algpseudocode}
\usepackage{multirow}
\usepackage{subcaption}
\usepackage{xifthen}
\usepackage{comment}
\usepackage{bbold}
\usepackage[inline]{enumitem}
\usepackage{orcidlink}
\usepackage{float}
\usepackage{framed}
\usepackage[operators]{cryptocode}

\newcommand{\Z}{\mathbb{Z}}
\newcommand{\Proj}{\text{Proj}}
\newcommand{\oracle}{\mathcal{O}}

\newcommand{\rand}{\text{Rand}}
\newcommand{\Garble}{\text{Garble}}
\newcommand{\Encode}{\text{Encode}}
\newcommand{\Eval}{\text{Eval}}
\newcommand{\Decode}{\text{Decode}}
\newcommand{\Gates}{\textup{\textsf{Gates}}}
\newcommand{\G}{\textsf{G}}
\newcommand{\Inputs}{\text{Inputs}}
\newcommand{\Outputs}{\text{Outputs}}
\newcommand{\Simulator}{\mathcal{S}}
\newcommand{\Hybrid}{\mathcal{G}}
\newcommand{\hash}{\mathcal{H}}
\newcommand{\new}[1]{\boxed{#1}}
\newcommand{\del}[1]{}

\newcommand{\adv}[2][]{\ifthenelse{\isempty{#1}}{\ensuremath{\text{Adv}(\text{#2})}}{\ensuremath{\text{Adv}_{#1}(\text{#2})}}}
\newcommand{\distinguisher}[1]{\ensuremath{\mathcal{D}_{\text{#1}}}}

\newcommand{\lsb}{\text{lsb}}
\newcommand{\length}{\ell}

\newcommand{\seck}{\kappa}

\renewcommand{\vec}[1]{\mathbf{#1}}

\newcommand{\scheme}{\Pi}

\setcounter{tocdepth}{2}

\theoremstyle{definition}
\newtheorem{definition}{Definition}
\theoremstyle{plain}
\newtheorem{theorem}{Theorem}

\newcommand{\codelocation}{\href{https://github.com/KULeuven-COSIC/gc-fast-sbox-eval}{https://github.com/KULeuven-COSIC/gc-fast-sbox-eval}}

\begin{document}
\title{Fast Evaluation of S-boxes with Garbled Circuits}
\author{Erik Pohle, Aysajan Abidin, Bart Preneel
        \thanks{COSIC KU Leuven, Belgium.}
        \thanks{This work is supported by CyberSecurity Research Flanders with reference number VR20192203 and by the Flemish Government through FWO SBO project MOZAIK S003321N.}
        }
\markboth{IEEE Transactions on Information Forensics and Security}{Fast Evaluation of S-boxes with Garbled Circuits}
\maketitle

\begin{abstract}
Garbling schemes are vital primitives for privacy-preserving protocols and secure two-party computation.
This paper presents a projective garbling scheme that assigns $2^n$ values to wires in a circuit comprising XOR and unary projection gates. A generalization of FreeXOR allows the XOR of wires with $2^n$ values to be very efficient. We then analyze the performance of our scheme by evaluating substitution-permutation ciphers. Using our proposal, we measure high-speed evaluation of the ciphers with a moderately increased cost in garbling and bandwidth. Theoretical analysis suggests that for evaluating the nine examined ciphers, one can expect a 4- to 70-fold improvement in evaluation performance with, at most, a 4-fold increase in garbling cost and, at most, an 8-fold increase in communication cost compared to the Half-Gates (Zahur, Rosulek and Evans; Eurocrypt'15) and ThreeHalves (Rosulek and Roy; Crypto'21) garbling schemes. In an offline/online setting, such as secure function evaluation as a service, the circuit garbling and communication to the evaluator can proceed in the offline phase. Thus, our scheme offers a fast online phase. Furthermore, we present efficient Boolean circuits for the S-boxes of TWINE and Midori64 ciphers. To our knowledge, our formulas give the smallest number of AND gates for the S-boxes of these two ciphers.
\end{abstract}

\section{Introduction}
\label{Introduction}
Privacy-preserving protocols enable collaborative computation on sensitive data while protecting the privacy of the sensitive data.
Successful implementations in a two-party scenario include privacy-preserving genome analysis~\cite{Jagadeesh2017}, email spam filtering~\cite{Gupta2017}, image processing~\cite{Chen2018} and machine learning~\cite{Kim2017}. The formalization of such two-party computation is called Secure Function Evaluation (SFE). Here the two parties, namely, Alice and Bob, want to compute a public function $f(x,y)$, where $x$ is the input of Alice and $y$ is the input of Bob, without revealing their input to each other.
Yao's garbled circuit protocol~\cite{Yao1986} has become a practical solution for SFE due to improved constructions. Moreover, garbling schemes (derived from the original garbled circuit construction) have also been identified as a useful cryptographic primitive. Most of the previous works focus on projective garbling schemes that assign two values to a wire, 0 and 1, such as the garbling scheme Half-Gates by Zahur et al.~\cite{Zahur2015} or the ThreeHalves scheme by Rosulek and Roy~\cite{Rosulek2021}.

This paper considers garbling schemes in the offline/online setting. The offline phase performs function-dependent pre-processing. Concretely, the garbler garbles the circuit computing $f$ and transmits the garbled gates to the evaluator but withholds the wire labels for the input layer. Once the input data of the garbler and the evaluator is available, the parties engage to obtain the appropriate wire labels for their respective inputs. Then, the evaluator evaluates the garbled circuit. The offline phase can be performed ahead of time and even batched to allow for optimal use of hardware and bandwidth if multiple function evaluations are expected. Hence, the online time, i.e., the time from having obtained the respective inputs to the evaluated output of the garbled circuit, is essential in this setting. This offline/online setting enables an efficient SFE as a service where the SFE service providers agree on a set of useful functions. The offline phase is run when the system is under low load and pre-processing results are stored. This way, the user of the service benefits from improved online times.

In this work, we examine a projective garbling scheme that assigns $2^n$ values to a wire. As a consequence, each wire in the circuit carries the semantics of an $n$-bit string. We generalize the encoding of FreeXOR by Kolesnikov and Schneider~\cite{Kolesnikov2008} to obtain a scheme where bitwise-XOR between $n$-bit strings is free. Our scheme allows fast evaluation of highly non-linear functions with $n$ input bits at a moderate additional garbling and bandwidth cost in the offline phase. We demonstrate this trade-off by implementing several symmetric-key primitives (SPN block ciphers with a cell structure as explained in Sect.~\ref{sec:applications}) and describe two application scenarios in Sect.~\ref{sec:applications:details}: distributed decryption in IoT-to-Cloud Secure Computation and distributing the key distribution center in the Kerberos authentication protocol. Garbling schemes for Boolean circuits need to express the non-linear S-box layer in the SPN primitives required in these two applications using multiple AND gates (e.g., for AES 32 AND gates) which translates to 64 and 96 hash function calls to evaluate each AES S-box in~\cite{Zahur2015} and \cite{Rosulek2021}, respectively. With our garbling scheme, evaluation of the entire S-box is done with one lookup table evaluation that costs one hash function call, yielding a speed-up by a large factor.

While the new garbling scheme assumes semi-honest adversaries, i.e. neither the garbler nor the evaluator may deviate from the protocol, several general approaches exist to make a garbled circuit protocol secure in the presence of active adversaries, which are allowed to deviate arbitrarily from the protocol. Prominent examples are based on cut-and-choose~\cite{Nielsen2009,Lindell2012,Huang2013,Huang2014}, on zero-knowledge proofs~\cite{Jarecki2007,Shelat2011} and authenticated garbling~\cite{Wang2017a,Katz2018,Dittmer2022}.
Moreover, semi-honest garbling schemes can be compiled into actively secure three party protocols in the honest majority setting~\cite{Mohassel2015}.

\subsubsection{Technical Overview}
The core ideas of the scheme are summarized as follows. We encode an $n$-bit string with bits $x_1, x_2, \dots x_n$ into a wire label as
\[
    W \oplus x_1R_1 \oplus x_2R_2 \oplus \dots \oplus x_nR_n \enspace,
\]
where $W$ is a random label. We call $R_i$ the wire label offsets that are randomly chosen by the garbler but fixed for all encodings in the circuit (see Definition~\ref{def:scheme:wire-label-offsets} for details) and if $x_i = 1$, $x_iR_i$ is $R_i$, otherwise it is the zero string. For $n=1$, this is the encoding of FreeXOR. Note that in previous garbling schemes with point-and-permute the wire labels have length $\approx \kappa$\footnote{To be precise: In, e.g., ~\cite{Zahur2015,Rosulek2021} the wire label length $k = \kappa+1$ since one bit is used as pointer bit.} while our encoding requires a length of $\kappa+n$ with $\kappa$ the security parameter.
We define two types of gates, XOR and projection gates. XOR gates compute the bitwise-XOR of two $n$-bit strings, require little garbling and evaluation work and are non-interactive, making them practically free.
A projection gate computes any $n$-bit to $m$-bit function on a wire value by using Yao's garbled table lookup, i.e., encrypting output wire labels using the respective input wire label as key. We apply standard garbled row reduction~\cite{Naor1999} and point-and-permute techniques~\cite{Beaver1990}. For a projection gate, the garbler's work is $2^n$ calls to the encryption primitive, and $2^n-1$ ciphertexts have to be sent to the evaluator, each ciphertext has size $\kappa+m$ bits. However, the evaluator only makes a single call to the encryption primitive, independent of the ``size'' $n$ of the projection gate. This makes the scheme attractive in the pre-processed garbled circuit model since any non-linear $n$-bit functionality can be evaluated with one call to the cryptographic primitive.

\subsubsection{Contributions}
We present a projective garbling scheme that assigns $n$-bit strings to each wire and in which XOR gates are free. The specific encoding of an $n$-bit string in a label allows seamless integration into existing garbling schemes that assign two values per wire. Following the spirit of modular proofs, we identify necessary properties of the cryptographic primitive (denoted by $\hash$) that is used to encrypt the truth table. Subsequently, we obtain a generalization of tweakable circular correlation robustness (TCCR, first defined by Choi et al.~\cite{Choi2012}), which we call $n$-TCCR, for $\hash$.

We apply the garbling scheme to compute nine symmetric-key primitives that follow a cell-based SPN architecture which is described in Sect.~\ref{sec:applications}, including AES, CRAFT, Fides, MANTIS, Midori, Piccolo, SKINNY, TWINE and WAGE. For these, we show a significant improvement in evaluation work in the online phase over the state-of-the-art schemes Half-Gates~\cite{Zahur2015} and ThreeHalves~\cite{Rosulek2021} that is traded off with moderate additional garbling work and/or communication cost in the offline phase. Table~\ref{tab:introduction:eval-improvement} shows the estimated evaluation improvement based on calls to $\hash$, which is complemented by a practical implementation in Sect.~\ref{sec:applications} that shows that this evaluation improvement translates into practice (see Table~\ref{tab:applications:spn:implementations}). Our code is publicly available\footnote{\codelocation}. We obtain evaluation times for, e.g., AES as low as 0.016 ms.

Furthermore, to facilitate implementation, we give Boolean circuits for the S-boxes of TWINE~\cite{Suzaki2012} and Midori64~\cite{Banik2015}, which is also used in MANTIS~\cite{Beierle2016} and CRAFT~\cite{Beierle2019}, using only AND and XOR gates. To the best of our knowledge, our Boolean circuits  give the smallest number of AND gates for these two ciphers, namely, 6 AND gates for TWINE's 4-bit S-box and 4 AND gates for the Midori64 $\textsf{Sb}_0$ S-box. Details can be found in Appendix~\ref{sec:appendix:sbox-formulas}.

\setlength{\tabcolsep}{0.5em}
\begin{table}
    \centering
    \caption{Evaluation work improvement for selected symmetric primitives over Half-Gates~\cite{Zahur2015}.
    Garbling and communication trade-off is listed in Table~\ref{tab:applications:spn:ciphers}.}
    \label{tab:introduction:eval-improvement}
    \resizebox{\linewidth}{!}{
    \begin{tabular}{lrr @{\hspace{2\tabcolsep}} lrr}
        \toprule
        Primitive & \multicolumn{2}{c}{Evaluation} & Primitive & \multicolumn{2}{c}{Evaluation}\\
        & \multicolumn{2}{c}{Improvement} & & \multicolumn{2}{c}{Improvement} \\
        \cmidrule{2-3}\cmidrule{5-6}
        \multicolumn{2}{r}{Half-Gates} &  \multicolumn{3}{c}{ThreeHalves \hfill Half-Gates} & ThreeHalves \\
        \midrule
        AES-128~\cite{NIST2001} & $26\times$ & $39\times$ & Piccolo-128 & $4.5\times$ & $6.8\times$ \\
        CRAFT~\cite{Beierle2019} & $5.7\times$ & $8.5\times$ & SKINNY-64-64~\cite{Beierle2016} & $7\times$ & $10\times$ \\
        Fides-80~\cite{Bilgin2013} & $15\times$ & $23\times$ & SKINNY-64-128 & $4.6\times$ & $7\times$ \\
        Fides-96 & $50\times$ & $75\times$ & TWINE-80~\cite{Suzaki2012} & $9.8\times$ & $14\times$ \\
        MANTIS~\cite{Beierle2016} &$4.3\times$ & $6.4\times$ & TWINE-128 & $9\times$ & $13\times$ \\
        Midori64~\cite{Banik2015} & $5.3\times$ & $8\times$ & WAGE~\cite{AlTawy2020} & $72\times$ & $109\times$ \\
        Piccolo-80~\cite{Shibutani2011} & $4.7\times$ & $7\times$ & & &\\
		\bottomrule
    \end{tabular}}
\end{table}

\subsubsection{Organisation}
The rest of the paper is organized as follows. In Sect.~\ref{sec:related-work}, we review related work on garbling schemes. Section~\ref{sec:background} gives more details on previous work which we build upon. We present our scheme in Sect.~\ref{sec:scheme} and prove its security in Sect.~\ref{sec:security}. A comparison of the state of the art and our scheme for nine (lightweight) symmetric primitives is given in Sect.~\ref{sec:applications}. Finally, we conclude the paper in Sect.~\ref{sec:conclusion}.
\section{Related Work}
\label{sec:related-work}
Recent improvements on Yao's garbled circuit protocol in the passive security setting focus on lowering bandwidth requirements, e.g.,~\cite{Pinkas2009,Kolesnikov2014}. In the line of work~\cite{Beaver1990,Naor1999,Kolesnikov2008} leading to the state-of-the-art schemes Half-Gates~\cite{Zahur2015} and ThreeHalves~\cite{Rosulek2021}, AND gates only require to send $2\seck$ bits and $\approx 1.5\seck$ bits, respectively, where $\seck$ is a security parameter, while XOR gates are free. Acharya et al.~\cite{Acharya2021} propose an approach to garbling where the garbled gate is no longer composed of ciphertexts from individual rows in the truth table, focusing only on binary gates.

While computation with binary values is mainly expressed in Boolean circuits with binary gates, gates with more inputs than two or more outputs than one have been studied as well. Dessouky et al.~\cite{Dessouky2017} define those gates as lookup tables and show how they can be evaluated in the passive security case in the Goldreich-Micali-Wigderson protocol~\cite{Goldreich1987}. FLUTE~\cite{Brueggemann2023} improves both setup and online cost of~\cite{Dessouky2017}. Damg{\aa}rd et al.~\cite{Damgaard2016,Damgaard2016a} design a table lookup for two-party secure computation and Keller et al.~\cite{Keller2017} extend it to the multi-party case based on secret-sharing. The basis for the aforementioned constructions is the one-time truth-table protocol OTTT by Ishai et al.~\cite{Ishai2013}. Table~\ref{tab:related-work:other-LUT-comparison} compares the (estimated) communication cost of these approaches for a 4-bit and 8-bit lookup table, respectively. Note that the aforementioned protocols require a function-dependent number of communication rounds in the online phase. While the local computations are faster than garbled circuit evaluation, the impact of network latency delaying the online computation time in this case is significantly increased by factor $\approx 10\times$ to $40\times$ for the SPN primitives (since they have 10 to 40 rounds). We aim to minimize online computation latency.

AES as a function has been studied explicitly by Durak and Guajardo~\cite{Durak2021}, SKINNY and Photon were studied by Abidin et al.~\cite{Abidin2023}. However, both works are in the arithmetic setting. 

\setlength{\tabcolsep}{.25em}
\begin{table}
    \centering
    \caption{Comparison of pre-processed lookup table (LUT) approaches in MPC protocols for $\kappa=128$. The depth of the circuit is denoted by $d$. Total communication is denoted in kilobytes (kB).}
    \label{tab:related-work:other-LUT-comparison}
    \begin{tabular}{llll}
        \toprule
        Scheme & Round  & \multicolumn{2}{c}{Total Comm. in kB} \\
        & Complexity & (4-bit LUT) & (8-bit LUT) \\
        \midrule
        OTTT~\cite{Ishai2013} (from \cite[Table IV]{Dessouky2017}) & $\mathcal{O}(d)$ & $\approx 6$ &  $\approx 262$ \\
        \midrule
        MiniMAC AES~\cite{Damgaard2016a} & $\mathcal{O}(d)$ & - & $\approx 700$ \\
        \midrule
        SP-LUT~\cite[Table IV]{Dessouky2017} & $\mathcal{O}(d)$ & $0.039$ & $0.288$ \\
        OP-LUT~\cite[Table IV]{Dessouky2017} & $\mathcal{O}(d)$ & $0.159$ & $\approx 65.5$ \\
        \midrule
        FLUTE~\cite[Table 2]{Brueggemann2023} & $\mathcal{O}(d)$ & \textbf{0.007} & \textbf{0.133} \\
        \midrule
        \midrule
        Fairplay~\cite{Malkhi2004}, Huang et el.~\cite{Huang2011} & $\mathcal{O}(1)$ & $1.024$ & $32.768$\\
        \midrule
        TASTY~\cite{Henecka2010} & $\mathcal{O}(1)$ & $0.96$ & $32.64$ \\
        \midrule
        Heath et al.~\cite{cryptoeprint:2024/369} & $\mathcal{O}(1)$ & $0.312$ & \textbf{1.392} \\
        \midrule
        This work & $\mathcal{O}(1)$ & \textbf{0.247} & $4.335$ \\
        \bottomrule
    \end{tabular}
\end{table}

In the garbled circuit domain, Fairplay~\cite{Malkhi2004} and TASTY~\cite{Henecka2010} already compute larger gates. Huang et al.~\cite{Huang2011} focus on an 8-bit to 8-bit AES S-box gate. 
Heath and Kolesnikov~\cite{Heath2021} construct a garbling gadget that computes a one-hot outer-product of two bit-vectors, which can be used to select one entry from a truth-table based on an index known by the evaluator. This approach has later been adapted to secret access to arbitrary truth-tables~\cite{cryptoeprint:2024/369}.
But unlike our scheme, these works consider multiple wires instead of multiple values \emph{per} wire. They also do not provide any security proof for the larger gates and Fairplay is vulnerable to attacks with malformed circuits~\cite{Nieminen2023}. Computing a gate with multiple input wires necessitates more generic hash function constructions that operate on inputs longer than one block. Practical performance improvements are due to the use of AES-NI instructions in permutation-based constructions such as~\cite{Bellare2013,Guo2020a,Chen2021}. It is unclear how these constructions extend to the multi-input case in the context of garbled circuits. Our scheme uses a cryptographic primitive with fixed-length input, enabling the use of AES-NI instructions. An overview of garbling and evaluation work, circuit size and the hash function construction for a generic $n$-to-$m$-bit gate is given in Table~\ref{tab:related-work:garbled-LUT-comparison}.

Lindell and Yanai~\cite{Lindell2018} investigate a projective garbling scheme with wires holding three-valued logic values but conclude that a translation into Boolean circuits is more efficient.

Since the work of Ball et al.~\cite{Ball2016} is conceptually very close to ours, we discuss it in detail in Sect.~\ref{sec:background:ball}. The main difference is that their scheme uses arithmetic circuits in $\Z_m$, where addition modulo an integer is free, while our proposal sticks to a bit representation where XOR is free.
This difference is essential for an efficient representation of SPN primitives that we target in this work. For $m > 2$ (otherwise it degenerates to a Boolean circuit for which~\cite{Zahur2015,Rosulek2021} offer better performance), XOR is a non-linear operation and thus would incur communication cost. Bitwise-XOR could either be emulated as $x+y-2xy$ where $x,y$ are $\Z_m$-encoded bits. This costs one multiplication gate per XOR, or $n$ multiplication gates to compute bitwise-XOR of $n$-bit strings, respectively. Alternatively, using projection gates in $\Z_m$ computing $\phi(x,y) = x \oplus y$ for $x,y \in \Z_{2^n}$ would require $2^{n+1}-1$ ciphertexts to be sent (disregarding the cost to compute $x||y$ as input first). Since in our scheme, XOR is linear, no ciphertexts need to be sent. We estimate that computing, e.g., SKINNY-64-128 in $\Z_m$ representation would be at least 3--4 times slower in all metrics compared to our scheme.

\begin{table}
    \centering
    \caption{Comparison of multi-input, multi-output gates in garbling schemes. We note the cost for a $n$-to-$m$-bit gate.\\
    $^\dagger$ $\kappa \approx 128-\max(n,m)$}
    \label{tab:related-work:garbled-LUT-comparison}
    \resizebox{\linewidth}{!}{
    \begin{tabular}{lllll}
        \toprule
        Scheme & Primitive & Garbling Work & Circuit Size & Evaluation Work \\
        & & & (in bits) & \\
        \midrule
        Fairplay~\cite{Malkhi2004} & SHA-1 & $2^n \cdot m \cdot 2$ & $2^n m\kappa$ & $m \cdot 2$\\
        \midrule
        TASTY~\cite{Henecka2010} & SHA-256 & $2^n \cdot m$ & $(2^n - 1) m\kappa$ & $m$ \\
        \midrule
        Huang et al.~\cite{Huang2011} & SHA-1 & $2^n \cdot \lceil \kappa m / 160 \rceil$ & $2^n m\kappa$ & $\lceil \kappa m / 160 \rceil$ \\
        \midrule
        Heath et al.~\cite{cryptoeprint:2024/369} & AES-NI & $\ge 2^n(1+m/\kappa)$ & $(n-1)\kappa$ & $\ge 2^n(1+m/\kappa))$ \\
        & & \quad $+nm$ & $+ 2^nm + nm$ & \quad$+nm$\\
        \midrule
        This work$^\dagger$ & AES-NI & $2^n$ & $(2^n - 1)\kappa$ & 1 \\
        & & & $+ (2^n-1)m$ & \\
        \bottomrule
    \end{tabular}}
\end{table}

\section{Background}
\label{sec:background}
We start with the arithmetic circuit scheme by Ball et al.~\cite{Ball2016} in Sect.~\ref{sec:background:ball} and detail the security model by Bellare, Hoang and Rogaway (BHR)~\cite{Bellare2012} in Sect.~\ref{sec:background:bhr}. Table~\ref{tab:background:notations} lists the notation used throughout the paper.

\setlength{\tabcolsep}{.25em}
\begin{table}\scriptsize
    \centering
    \caption{Notation.}
    \label{tab:background:notations}
    \begin{tabular}{|c|l|}
        \hline
        $\seck$ & Security parameter \\
        $\Vec{v}$ & Bold letters denote vectors \\
        $\{0,1\}^l$ & The set of bit-vectors of length $l$ \\
        $W_\alpha^\beta$ & The wire label of wire $\alpha$ that encodes the value $\beta$ \\
        $A \oplus B$ & Bitwise XOR for $A,B \in \{0,1\}^l$ \\
        $A||B$ & Bit-vector concatenation for $A \in \{0,1\}^l, B \in \{0,1\}^{l'}$ \\
        $\{A,B\}||C$ & Bit-vector concatenation extended to sets, i.e., $\{A||C, B||C\}$ \\
        $x' \gets x$ & Assignment of value $x$ to $x'$ \\
        $x \sample \{a,b,c,\dots\}$ & Uniform sampling from the set $\{a,b,c,\dots\}$ \\
        \hline
    \end{tabular}
\end{table}

\subsection{Garbled Circuits for Bounded Integers}
\label{sec:background:ball}
Ball et al.~\cite{Ball2016} propose a scheme based on garbled circuits that assigns integers $x \in \Z_m$ to each wire in the circuit. In this representation, addition (in $\Z_m$) is free in the same sense as FreeXOR. We briefly describe their scheme as our scheme is similar but represents $n$-bit strings per wire instead of numbers in $\Z_m$.

The wire encoding of $x \in \Z_m$ is $W_i^x = W_i^0 + x \odot \Delta_m \enspace,$
where $W_i^0, \Delta_m \in \Z_m^{\lambda_m}$, $\lambda_m = \lceil \frac{\seck}{\log_2 m} \rceil$. Addition is component-wise in the ring $\Z_m$. Here $\odot$ denotes a scalar multiplication. For each $m$, $\Delta_m$ is a secret, random vector known by the garbler. 

The scheme mainly offers two types of gates, addition and unary projection. For addition of wires $a$ and $b$ with output wire $c$, let $W_a^0, W_b^0$ be the two input wire labels of zero, then the garbler computes
$W_c^0 = W_a^0 + W_b^0$
as the output zero label. The evaluator, given $W_a^x$ and $W_b^y$ for evaluation, computes
\[
	W_c^{x+y} = W_a^x + W_b^y = \underbrace{W_a^0 + W_b^0}_{W_c^0} + (x+y) \odot \Delta_m \enspace.
\]
Addition incurs neither transmitted ciphertexts nor invocations of the encryption primitive.
Let $\phi: \Z_n \rightarrow \Z_m$ be an arbitrary function. The projection gate $\Proj_\phi$ computes the operation $x \mapsto \phi(x)$. Let $G$ be the garbled table, then the garbler fills $G[x + r]$ for every $x \in \Z_n$ as follows:
\[
    H(W_a^0 + x\odot\Delta_n) + W_c^0 + \phi(x) \odot \Delta_m = H(W_a^x) + W_c^{\phi(x)} \enspace,
\]
where $r$ is the secret cyclic shift offset. We can reduce the number of ciphertexts per projection gate to $n-1$ by applying garbled row reduction. The zero label is obtained when $r = -x$,
$W_c^0 = -H(W_a^{-r}) - \phi(-r) \odot \Delta_m$, from the encryption above. This, analogous to the binary case, fixes the ciphertext of the first \emph{garbled} row to $0^{\lambda_m}$.

Again, one element of the label can be used as a pointer and replace the shift $r$ during garbling if $\Delta_n$ is chosen appropriately. With this, the evaluator only has to decrypt the ciphertext the pointer indicates.

\subsection{Security Model by Bellare, Hoang and Rogaway}
\label{sec:background:bhr}
Bellare, Hoang and Rogaway~\cite{Bellare2012} define a security model for garbling schemes that formalizes the principle of circuit garbling as a cryptographic primitive. Many recent garbling schemes were proven secure in their model, e.g., \cite{Zahur2015,Kempka2016,Guo2020,Rosulek2021}. As we will use the same model, we give a brief overview.

A garbling scheme is a tuple of \Garble, \Encode, \Eval~and \Decode~algorithms:
\begin{itemize}
	\item \Garble: Transforms the input circuit $f$ into the tuple $(GC, e, d)$ where $GC$ is the garbled circuit, $e$ is the input encoding information (e.g., all semantic labels for input wires) and $d$ is the decoding information.
	\item \Encode: Encodes a given input $x$ using the semantic labels $e$ and returns a garbled input $X$, e.g., the input label with semantic $x$.
	\item \Eval: Evaluates the garbled circuit $GC$ using the input wire labels $\{W_i\}_{i \in \Inputs}$ and returns the output wire labels $\{W_i\}_{i \in \Outputs}$.
	\item \Decode: Decodes the output wire labels $\{W_i\}_{i \in \Outputs}$ using the decoding information $d$ and returns the plaintext output $y \in \{0,1\}^m$ or $\bot$ if the output wire labels are invalid.
\end{itemize}
The garbling scheme must produce correct circuit evaluations for any circuit $f$ and inputs $x \in \{0,1\}^n$.
Let $GC, e, d$ be the outputs of $\Garble(f)$, and $X_i$ the output of $\Encode(x_i, e)$ for $i \in \Inputs$ then $\Decode(\Eval(GC, \{ X_i \}_{i \in \Inputs}), d) = f(x)$
where $f(x)$ denotes the circuit evaluation in the clear.

Bellare et al. define two notions of secrecy. In the privacy notion, given $(GC, X, d)$, a party cannot learn any information besides what is revealed from the final output $y$ and the side-information function $\Phi$. In our case, $\Phi = \Phi_{\textup{topo*}}$ where only the circuit topology and the XOR gates are revealed but the function computed by projection gates remains hidden to the evaluator\footnote{In \cite{Bellare2012} $\Phi_{\textup{topo}}$ is defined as completely gate hiding, we therefore denote the slightly weaker notion with $\textup{topo*}$.}.
The privacy property can be achieved by giving a simulator $\Simulator$ for the \Garble~function that only receives the output $y$ and $\Phi$.
In the code-based game in Fig.~\ref{fig:background:prv.sim}, the garbling scheme is prv.sim secure if for every polynomial-time adversary $\mathcal{A}$ there is a polynomial-time simulator $\Simulator$ such that \adv{prv.sim}  is negligible, where
\[
    \resizebox{\linewidth}{!}{%
    $\adv{prv.sim} = \left|\Pr[\mathcal{A}\text{ wins prv.sim}] - \frac{1}{2}\right| = \left|\Pr[b = b'] - \frac{1}{2}\right|\,.$%
    }
\]
Intuitively, if the output of the simulator is indistinguishable from the output of \Garble~and \Encode~on a circuit and input chosen by the adversary, the scheme is prv.sim secure.
In the notion of obliviousness obv.sim, the adversary does not learn the decoding function. So given $(GC,X)$, a party cannot learn any information besides the side-information $\Phi$. 
The advantage is defined analogously to $\adv{prv.sim}$.

\begin{figure}[H]
\scriptsize
	\begin{subfigure}{0.49\columnwidth}
    \centering
    \algrenewcommand\algorithmicindent{0.5em}%
    \FrameSep1pt
    \begin{framed}
	\begin{algorithmic}
		\Function{\Garble}{$f,x$}
		\State $b \sample \{0,1\}$
		\If{b = 1}
			\State $(GC,e,d) \gets \Garble(f)$
			\State $X \gets \Encode(x, e)$
		\Else
			\State $y \gets f(x)$
			\State $(GC,X,d) \gets \Simulator(1^k, y, \Phi(f))$
		\EndIf
		\State \Return $(GC,X,d)$
		\EndFunction
	\end{algorithmic}
	\end{framed}
    \caption{Game prv.sim$_{\Phi, \Simulator}$.}
	\end{subfigure}
	\begin{subfigure}{0.49\columnwidth}
    \centering
    \algrenewcommand\algorithmicindent{0.5em}%
    \FrameSep1pt
	\begin{framed}
	\begin{algorithmic}
		\Function{\Garble}{$f,x$}
		\State $b \sample \{0,1\}$
		\If{b = 1}
			\State $(GC,e,d) \gets \Garble(f)$
			\State $X \gets \Encode(x, e)$
		\Else
            \State
			\State $(GC,X) \gets \Simulator(1^k, \Phi(f))$
		\EndIf
		\State \Return $(GC,X)$
		\EndFunction
	\end{algorithmic}
	\end{framed}
    \caption{Game obv.sim$_{\Phi, \Simulator}$.}
	\end{subfigure}
	\caption{For every circuit $f$ and input $x$ of the adversary's choice, the respective game function is called and the adversary outputs a choice $b'$ given $(GC,X,d)$ (resp. $(GC,X)$). The adversary wins if $b=b'$.}
	\label{fig:background:prv.sim}
\end{figure}
\section{The Scheme}
\label{sec:scheme}
In Sect.~\ref{sec:scheme:circuit-definition}, we first describe the notation for a circuit comprising XOR gates and projection gates. Then, we detail how the garbler encodes $n$-bit strings and transforms them into wire labels. Next, in Sect.~\ref{sec:scheme:gates}, we show how XOR gates are garbled and evaluated, followed by a description of how projection gates are garbled and evaluated.
Section~\ref{sec:scheme:circuit-constructs} describes higher-level gadgets that can be obtained from the aforementioned gates. In Sect.~\ref{sec:scheme:garbling-scheme}, all concepts are pieced together to describe the garbling, evaluation and decoding function. We also describe how input is handled. The complete garbling scheme $\scheme$ is given in Fig.~\ref{fig:scheme:scheme}.
We start with some general notations.
Let $\lsb_n(W)$ be the $n$ least significant bits\footnote{The exact location of the $n$ bits in $W$ is not important for the scheme as long as it is consistently used by both parties.} of the bit-vector $W \in \{0,1\}^k$. With $k$ we denote the wire label length.
We use a hash function $\hash: \{0,1\}^k \times \{0,1\}^\tau \rightarrow \{0,1\}^k$ that accepts a $k$-bit input, a $\tau$-bit tweak and outputs $k$ bits. Further properties of $\hash$ are presented in Sect.~\ref{sec:security:ntccr-def}.

\subsection{Circuit Definition}
\label{sec:scheme:circuit-definition}
We define a circuit with a $p$-bit input and $q$ gates. The function computed by the circuit is denoted by $f$. 
Let the wire index be $1,\ldots,p,p+1,\ldots,p+q$, where the input wires have index $1,\ldots,p$ and the output wire of the $i$-th gate has index $p+i$. 
We denote the set of input wire indices as $\Inputs$, and the set of output wire indices as $\Outputs$. 
We associate a bit-length $\length(i)$ to each wire $i$.
Let $\bar{n}$ denote the maximum bit-length of wires used in $f$, then we use bit strings of length $k = \seck + \bar{n}$ as wire labels.
Let \Gates~be a topologically sorted list of gates $\G_1, \ldots, \G_q$. 
We distinguish two types of gates: XOR and projection gates. XOR gates accept two wires of the same bit-length $n$ as input and output a wire with bit-length $n$. The unary projection gate accepts one $n$-bit wire and outputs one $m$-bit wire.
\begin{figure}
	\centering
	\begin{framed}
    \small
    \raggedright Parameters:\\
    \quad $\bar{n}$ maximum bit-length of wires in $f$,\\
    \quad $k$ wire label length with $k = \seck + \bar{n}$
    \algrenewcommand\algorithmicindent{0.5em}%
		\begin{algorithmic}[1]
			\Function{GenR}{$\bar{n}$}
    			\For{$i \in \{1,\ldots,\bar{n}\}$}
        			\State $\vec{R}_i \sample \mathcal{R}_i$ \Comment{cf. Definition~\ref{def:scheme:wire-label-offsets}}
    			\EndFor
    			\State \Return $\vec{R}_1, \ldots, \vec{R}_{\bar{n}}$
			\EndFunction
			\Statex 
			
			\Function{\Garble}{$f$}
    			\State $\vec{R}_1, \ldots, \vec{R}_{\bar{n}} \gets \Call{GenR}{\bar{n}}$
    			\For{$i \in \Inputs$}
    				\State $W_i^{0^{\length(i)}} \sample \{0,1\}^k$
    			\EndFor
    			\For{$\G_i \in \Gates$}
    				\If{$\G_i = XOR$}
    					\Comment $\G_i$ with $n$-bit input wires $a$,$b$
    					\State $W_i^{0^n} \gets W_a^{0^n} \oplus W_b^{0^n}$
    				\Else
    				\Comment $\G_i$ with $n$-bit input wire $a$
    				\State $W_i^{0^m} \sample \{0,1\}^k$ \Comment and $\phi: \{0,1\}^n \rightarrow \{0,1\}^m$
    				\For{$x \in  \{0,1\}^n$}
    				\State $GC[i,\lsb_n(W_a^x)] \gets \hash(W_a^x, i) \oplus W_i^{0^m} \oplus \phi(x) \cdot \vec{R}_m$
    				\EndFor
    				\EndIf
    			\EndFor
                \State $d \gets \{ \lsb_{\length(i)}(W_i^{0^{\length(i)}}) \}_{i \in \Outputs}$
    			\State $e \gets \{W_i^{0^{\length(i)}} \}_{i \in \Inputs}, \vec{R}_1, \ldots, \vec{R}_{\bar{n}}$
    			\State \Return $GC, e, d$
			\EndFunction
			
			\Statex 
			\Function{\Encode}{$e,\{x_i\}_{i \in \Inputs}$}
			    \For{$i \in \Inputs$}
			    \State $X_i \gets W_i^{0^{\length(i)}} \oplus x_i \cdot \vec{R}_{\length(i)}$
			    \EndFor
			    \State \Return $\{X_i\}_{i \in \Inputs}$
			\EndFunction
			
			\Statex 
			\Function{\Eval}{$GC,\{W_i\}_{i \in \Inputs}$}
    			\For{$\G_i \in \Gates$}
        			\If{$\G_i = XOR$}
            			\Comment $\G_i$ with input wires $a$, $b$
            			\State $W_i \gets W_a \oplus W_b$
        			\Else
            			\Comment $\G_i$ with $n$-bit input wire $a$
            			\State $W_i \gets \hash(W_a,i) \oplus GC[i,\lsb_n(W_a)]$
        			\EndIf
    			\EndFor
    			\State \Return $\{W_i\}_{i \in \Outputs}$
			\EndFunction
			\Statex 
			\Function{\Decode}{$\{W_i\}_{i \in \Outputs}$}
    			\For{$i \in \Outputs$}
    			    \State $y_i \gets d_i \oplus \lsb_{\length(i)}(W_i)$
    			\EndFor
    			\State \Return $y$
			\EndFunction
		\end{algorithmic}
	\end{framed}
	\caption{The new garbling scheme $\scheme$ comprises a garble, evaluation, encoding and decoding function.}
	\label{fig:scheme:scheme}
\end{figure}

\begin{definition}[Wire Label Offsets]
	For each bit-length $n$ ($1 \le n \le \bar{n}$) that is used in $f$, a wire label offset is a bit-vector of length $k = \seck + n$ with $\seck$ random bits and $n$ fixed bits. 
	The garbler draws the matrix $\vec{M}$ uniformly at random from $\{0,1\}^{\kappa \times n}$ and appends fixed bits to each column-vector to form $\vec{R}_n
		= \begin{pmatrix}\vec{M}\\\vec{I}_n\end{pmatrix}\,,
	$
	where $\vec{I}_n \in \{0,1\}^{n \times n}$ is the identity matrix. The column vector $R_i$ in $\vec{R}_n$ is the $i$-th wire label offset. We denote the distribution from which $\vec{R}_n$ is sampled $\mathcal{R}_n$, i.e., $\vec{R}_n \sample \mathcal{R}_n$.
	\label{def:scheme:wire-label-offsets}
\end{definition}
The matrix $\vec{R}_n$ is used throughout the whole circuit for all wires of bit-length $n$. We use the last $n$ bits of the label to fix distinct values to allow point-and-permute~\cite{Beaver1990}.
The inner product of $x \cdot \vec{R}_n$ is defined as $x_1 R_1 \oplus \ldots \oplus x_n R_n$. 

\begin{definition}[Wire Label Encoding]
	The encoding $W_i^x$ of an $n$-bit string $x \in \{0,1\}^n$ on a wire with index $i$ is defined as
    $W_i^x = W_i^{0^n} \oplus x \cdot \vec{R}_n$.
	\label{def:scheme:wire-label-encoding}
\end{definition}
Note, this yields a unique encoding for all $x$ and $\vec{R}$ even if the random part $\vec{M}$ is linearly dependent in the columns because the lower $n$ bits of $x \cdot \vec{R}$ are always unique due to $\vec{I}_n$.

Intuitively, there are $n$ distinct offsets $R$, one for each encoded bit. The offset applied to a wire label that encodes $x$ is the linear combination of $R$ values. 

\subsection{Gates}\label{sec:scheme:gates}
For an XOR gate with $n$-bit input wires $a$ and $b$, and output wire $c$, the garbler generates the output wire label $W_c^x \gets W_a^{0^n} \oplus W_b^{0^n} \oplus~x\cdot\vec{R}_n$ where $x \in \{0,1\}^n$.
No ciphertext is sent.

To evaluate an XOR gate, let $W_a$ and $W_b$ be the wire labels that the evaluator obtained as input labels for the XOR gate. The output label is then computed as $W_c \gets W_a \oplus W_b$.

A projection gate $\Proj_\phi$ computes the unary projection $\phi: \{0,1\}^n \rightarrow \{0,1\}^m$, a $n$-to-$m$-bit function.
Let $a$ be the input wire index to the projection gate and $c$ be the index of the output wire, the garbler first draws the output wire label for 0 at random: $W_c^{0^m} \sample \{0,1\}^k$ and then generates $2^n$ ciphertexts for each $x \in \{0,1\}^n$ and stores the result in the garbled table at the position indicated by the pointer bits, i.e., $GC[c,\lsb_n(W_a^x)] \gets \hash(W_a^x,c) \oplus W_c^{\phi(x)}$.
We apply the row-reduction technique~\cite{Naor1999} and reduce the number of ciphertexts that need to be sent by one.
Let $a$ be the input wire index to the projection gate $\Proj_\phi$ and $c$ be the index of the output wire.
Then, the garbler chooses the output wire label for $0^m$ as
\[
	W_c^{0^m} = \hash(W_a^{\lsb_n(W_a^{0^n})}, c) \oplus \phi(\lsb_n(W_a^{0^n})) \cdot \vec{R}_m
\]
and computes the remaining ciphertexts as described above.
Since the first ciphertext (where $x=\lsb_n(W_a^{0^n})$) is always $0^k$, it does not need to be sent. The number of rows sent to the evaluator is therefore $2^n - 1$.

For evaluation, let $W_a$ be the wire label that the evaluator obtained as input to the projection gate. The output label $W_c$ is computed by
\[
	W_c \gets GC[c,\lsb_n(W_a)] \oplus \hash(W_a,c)~,
\]
where the position of the ciphertext to evaluate is indicated by the pointer bits of the input wire label. The first ciphertext is set to $0$: $GC[c,0^n] = 0^m$.

\subsection{Circuit Constructions}
\label{sec:scheme:circuit-constructs}
Below, we give useful gadgets comprised of XOR and projection gates.
\begin{description}
\item[Wire Composition.]
We can compose an $n$-bit wire $a$ with an $m$-bit wire $b$ resulting in a $(n+m)$-bit wire $c$. The composition construction computes the functionality $f: \{0,1\}^n \times \{0,1\}^m \rightarrow \{0,1\}^{n+m}$ defined by $f(x,y)= x||y$.
The composition is then $W_c \gets \Proj_s(W_a) \oplus \Proj_{s'}(W_b)$,
where $s: \{0,1\}^n \rightarrow \{0,1\}^{n+m}$ is defined as $s(x) = x||0^m$ and $s': \{0,1\}^m \rightarrow \{0,1\}^{n+m}$ is given as $s'(y) = 0^n||y$. 
Wire composition costs $2^n + 2^m$ ciphertexts to garble and two ciphertexts to evaluate.

Note that the construction is not limited to two arguments. It is efficient to compose many wires together at once instead of cascading or using a tree-based approach\footnote{Following the example, the tree-based approach first composes $a||b$ and $c||d$ resulting in 2-bit wires. Then $ab||cd$ is composed.}. E.g., to compose four $1$-bit wires $a,b,c,d$ , we may use
\[
\Proj_{s_a}(W_a) \oplus \Proj_{s_b}(W_b) \oplus \Proj_{s_c}(W_c) \oplus \Proj_{s_d}(W_d)\enspace,
\]
where $s_a(x) = 000||x$, $s_b(x) = 00||x||0$, $s_c(x) = 0||x||00$, $s_d = x||000$. This costs $4\cdot 2^1 = 8$ ciphertexts (or 4 with row reduction) instead of $4\cdot 2^1 + 2 \cdot 2^2 = 16$ (resp. 10) ciphertexts.

\item[Wire Decomposition.]
Likewise, we can decompose, i.e., split, a $2n$-bit wire into two $n$-bit wires.
Let $W_a$ be a $2n$-bit wire, then the decomposition construction computes $f: \{0,1\}^{2n} \rightarrow \{0,1\}^{n}$ defined as $f(x_1||\ldots||x_{2n}) = x_1||\ldots||x_n$ and $f': \{0,1\}^{2n} \rightarrow \{0,1\}^{n}$ as $f'(x_1||\ldots||x_{2n}) = x_{n+1}||\ldots||x_{2n}$ via two projection gates. 
Note that this time, a tree-like decomposition, e.g., from 4-bit to 2-bit to 1-bit, is more efficient than constructing four projections from 4-bit to 1-bit. The latter costs $4 \cdot 2^4 = 64$ ciphertexts (60 with row reduction) while the former costs $2 \cdot 2^4 + 4 \cdot 2^2 = 48$ (resp. 42) ciphertexts.

\item[Constants.]
In the garbling scheme, we can encode public constants or constants known only to the garbler at no cost.
Let $x \in \{0,1\}^n$ be the constant for the $n$-bit wire $a$, then the garbler chooses $W_a^{0^n} \gets x \cdot \vec{R}_n$.
This fixes the label $W_a^x$ to $0^k$. No ciphertext is sent to the evaluator. Likewise, the evaluator uses $W_a = 0^k$ for further evaluation.
\end{description}

\subsection{Garbling Scheme}
\label{sec:scheme:garbling-scheme}
We now describe the complete garbling scheme (see Fig.~\ref{fig:scheme:scheme}).
\begin{description}
\item[Garble.]
The garbler chooses $\bar{n}$ matrices of offset values (see Definition~\ref{def:scheme:wire-label-offsets}). For each input bit $i$, a wire label $W_i^0$ is chosen uniformly at random. The garbling process applies the operations for projection and XOR gates as described in Sect.~\ref{sec:scheme:gates} gate-by-gate in topological order. In the end, the garbling routine outputs the ciphertexts, input wire values, offsets and decoding information.

\item[Encoding and Oblivious Transfer.]
\label{sec:scheme:ot}
The garbler encodes their own input by picking the respective wire label.
In Yao's protocol, the evaluator obtains the appropriate wire labels that correspond to its input via oblivious transfer (OT)~\cite{Rabin2005}. Using OT extensions~\cite{Ishai2013,Asharov2013} speeds this up in practice. To obtain the correct label for an $n$-bit wire, one could simply perform a 1-out-of-$2^n$ OT. Naor and Pinkas~\cite{Naor1999a} show how to reduce this to $n$ 1-out-of-2 OTs by introducing additional
pseudorandom function  (PRF) evaluations. However, using the FreeXOR property of our scheme, we can instead perform only $n$ 1-out-of-2 OTs (as in a garbling scheme with 2 wire labels). For each input bit $b_i$ at position $i$, the sender sends
\begin{equation*}
    \begin{array}{lll}
        W_i^{0^n} & & \text{ if $b_i = 0$}\,, \\
	    W_i^{0^n} & \oplus~R_i & \text{ if $b_i = 1$}\,,
    \end{array}
\end{equation*}
where $R_i$ is the $i$-th column vector in $\vec{R}_n$. To obtain the wire label for the $n$-bit wire, we XOR the obtained labels together at no additional cost.  Note that $W_i^{0^n}$ is a fresh random wire label for each bit $i$ of the input.

\item[Evaluation and Decoding.]
Once the evaluator obtains the garbled inputs, it computes the garbled output of each gate accordingly (see Sect.~\ref{sec:scheme:gates}). 
Having computed the garbled output, the evaluator may either share the wire labels with the garbler or directly use the decoding information $d_i = \lsb_n(W_i^{0^n})$ for output wire $i \in \Outputs$ in the decoding function to obtain the output bits in the clear. 
Let us briefly look at why this decoding scheme is correct. Let $i$ be an output wire. Since we fixed $\lsb_n(y \cdot \vec{R}_n) = y \cdot \vec{I}_n = y$ by construction of the offset values, for any value $y \in  \{0,1\}^n$, we have $\lsb_n(W_i^y) = \lsb_n(W_i^{0^n}) \oplus y$. 
As $d_i = \lsb_n(W_i^{0^n})$, the decoding is correct
\[\resizebox{\linewidth}{!}{%
$d_i \oplus \lsb_n(W_i^x) = \lsb_n(W_i^{0^n}) \oplus \lsb_n(W_i^{0^n}) \oplus \lsb_n(y\cdot\vec{R}_n) = y\,.$}
\]
\end{description}
\section{Security}
\label{sec:security}
Using the BHR security model (see Sect.~\ref{sec:background:bhr}) we show that if a hash function satisfies the properties of $n$-TCCR security defined in Sect.~\ref{sec:security:ntccr-def} below, our scheme is prv.sim (Sect.~\ref{sec:security:prv.sim}) and obv.sim (Sect.~\ref{sec:security:obv.sim}) secure. We sketch how to achieve authenticity in Sect.~\ref{sec:security:authenticity}.

\subsection{(n-)TCCR Security}
\label{sec:security:ntccr-def}
We revisit the tweakable circular correlation robustness (TCCR) definition by Guo et al.~\cite{Guo2020a} adapted to our notation.
\begin{definition}[TCCR Security~\cite{Guo2020a}]
	A TCCR (tweakable circular correlation robust) hash function $\hash$ is a function $\{0,1\}^k \times \{0,1\}^\tau \rightarrow \{0,1\}^k$ that accepts a message $m$ and a tweak $t$. In the TCCR security game, the distinguisher \distinguisher{TCCR} is given one of the two oracles with signature $\{0,1\}^k \times \{0,1\}^\tau \times \{0,1\} \rightarrow \{0,1\}^k$
	\begin{itemize}
		\item (Real) $\oracle_R(m,t,b) = \hash(m \oplus R, t) \oplus bR$
		\item (Ideal) $\rand(m,t,b)$ is a random function.
	\end{itemize}
	with the goal to decide which is the oracle given to it. The distinguisher doesn't know the secret value $R \in \{0,1\}^k,\, R \sample \mathcal{R}_{\textup{TCCR}}$ and is only allowed to make legal queries. An illegal query is $(m,t,1-b)$ if $(m,t,b)$ has been queried before.
	
	\noindent We define the advantage as
	\[\begin{aligned}
		&\adv[\mathcal{R}_{\textup{TCCR}}]{\distinguisher{TCCR}} \\
            &\quad= \left| \Pr[\distinguisher{TCCR}^\rand(1^\seck) = 0] - \Pr_{R \gets \mathcal{R}_{\textup{TCCR}}}[\distinguisher{TCCR}^{\oracle_R}(1^\seck) = 0] \right|\,,
        \end{aligned}
	\]
	where $\distinguisher{}^\oracle$ signifies that the distinguisher has access to oracle $\oracle$. We call $\hash$ TCCR secure if $\adv[\mathcal{R}]{\distinguisher{TCCR}}$ is negligible in the security parameter $\seck$.
	\label{def:tccr}
\end{definition}
Note that the advantage of $\distinguisher{TCCR}$ depends on the distribution $\mathcal{R}_{\textup{TCCR}}$ of the secret value $R$.
Next, we define $n$-TCCR security, a generalized TCCR notion incorporating $n$ secret offsets.
\begin{definition}[n-TCCR Security]\label{def:n-tccr}
	A n-TCCR hash function $\hash$ is a function $\{0,1\}^k \times \{0,1\}^\tau \rightarrow \{0,1\}^k$ that accepts a message $m$ and a tweak $t$.
	In the n-TCCR security game, the distinguisher \distinguisher{n-TCCR} is given one of the two oracles with signature $\{0,1\}^k \times \{0,1\}^\tau \times \{0,1\}^n \times \{0,1\}^n \rightarrow \{0,1\}^k$
	\begin{itemize}
		\item (Real) $\oracle_{\vec{R}}(m,t,\vec{a}, \vec{b}) = \hash(m \oplus \vec{a}\cdot\vec{R}, t) \oplus \vec{b}\cdot\vec{R}$
		\item (Ideal) $\rand(m,t,\vec{a},\vec{b})$ is a random function
	\end{itemize}
	with the goal to decide which is the oracle given to it. 
	We interpret $\vec{a}, \vec{b} \in \{0,1\}^n$ as binary vectors, $\vec{R} = (R_1, \dots, R_n),\, \vec{R} \in \{0,1\}^{k \times n} \sample \mathcal{R}_n$ and $R_i \in \{0,1\}^k$, $1 \le i \le n$. The expression $\vec{a}\cdot\vec{R} = a_1 R_1 \oplus \dots \oplus a_n R_n$ is the linear combination of offsets defined by $\vec{a}$.
	The distinguisher doesn't know the secret value $\vec{R}$ and is only allowed to make legal queries. An illegal query is $\vec{a} = 0$ or $(m,t,\vec{a}, \vec{b}')$ if $(m,t,\vec{a},\vec{b})$ has been queried before for $\vec{b} \neq \vec{b}'$.
	
	\noindent We define the advantage as
	\[\begin{aligned}
		&\adv[\mathcal{R}_n]{\distinguisher{n-TCCR}} \\
            &\quad= \left| \Pr[\distinguisher{n-TCCR}^{\textsf{Rand}}(1^\kappa) = 0] - \Pr_{\vec{R} \gets \mathcal{R}_n}[\distinguisher{n-TCCR}^{\mathcal{O}_{\vec{R}}}(1^\kappa) = 0] \right|\,.
        \end{aligned}
	\]
	We call $\hash$ $n$-TCCR secure if $\adv[\mathcal{R}_n]{\distinguisher{n-TCCR}}$ is negligible in $\seck$.
\end{definition}

Clearly, every $n$-TCCR secure hash function is also TCCR secure. Since in $n$-TCCR, the distinguisher has more freedom regarding queries to the oracle, a statement about the inverse direction is not straightforward.

Limitations and assumptions on TCCR are realized once instantiated with a concrete construction, e.g., the one given by~\cite{Guo2020a} is secure in the random permutation model with specific bounds depending on the number of queries made by the distinguisher. In garbling schemes, the number of queries roughly translates to the number of AND gates in the garbled circuit(s) that use the same offset. Since $n$-TCCR is a generalization of TCCR, we expect similar assumptions and limitations. We estimate an advantage of around $n$-bit for the distinguisher of $n$-TCCR compared to TCCR when using non-dedicated TCCR constructions. This means that, to attain a similar security level as TCCR, one may need to choose larger parameters for $n$-TCCR.

\subsection{Privacy}
\label{sec:security:prv.sim}
The prv.sim definition states that given the garbled circuit $GC$, all the labels of the garbled input $X$ and the decoding information $d$, no information is revealed about the input except from what can be deduced from the output $y$.

\begin{theorem}\label{thm:security}
	Given a $\bar{n}$-TCCR secure hash function $\hash$ and $\bar{n} \ll \kappa$, the garbling scheme $\scheme$ is prv.sim secure.
\end{theorem}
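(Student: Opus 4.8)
The plan is to follow the template of the FreeXOR / Half-Gates privacy proofs in the BHR model: exhibit a simulator and then move from the real experiment to the simulated one through a gate-by-gate hybrid argument whose single nontrivial step is discharged by $\bar{n}$-TCCR security. First I would build the simulator $\Simulator(1^\seck, y, \Phi_{\textup{topo*}}(f))$ so that it works only with \emph{active} labels. From $\Phi_{\textup{topo*}}$ it learns the topology and which gates are XOR, but not the $\phi$'s. For each input wire $i$ it sets $X_i \sample \{0,1\}^k$; it walks $\Gates$ in topological order; for an XOR gate with inputs $a,b$ it propagates $W_c \gets W_a \oplus W_b$ (this matches $\scheme$ because $W_a^{v_a}\oplus W_b^{v_b}=W_c^{v_a\oplus v_b}$ by linearity of $x\mapsto x\cdot\vec{R}_n$); for a projection gate with $n$-bit input $a$ and $m$-bit output $c$ it sets the row-reduced entry $GC[c,0^n]\gets 0^k$, and if $\lsb_n(W_a)\neq 0^n$ it draws $W_c\sample\{0,1\}^k$ and puts $GC[c,\lsb_n(W_a)]\gets \hash(W_a,c)\oplus W_c$, otherwise it is forced to take $W_c\gets\hash(W_a,c)$; the remaining rows are filled with uniform $k$-bit strings. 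Finally, for each $i\in\Outputs$ it sets $d_i \gets y_i \oplus \lsb_{\length(i)}(W_i)$. Note $\Simulator$ never touches the offsets $\vec{R}_n$ (they live only in $e$, which is not handed to the adversary in prv.sim) nor the $\phi$'s, so it is well defined on its inputs.

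Next I would define hybrids $\Hybrid_0,\dots,\Hybrid_q$ where $\Hybrid_0$ is the real branch of prv.sim and, in $\Hybrid_j$, the garbled tables of the first $j$ projection gates are ``faked'' exactly as $\Simulator$ does them (active row consistent with the active labels, all other non-zero rows uniform, active output label resampled uniformly unless it coincides with the row-reduced zero row), while the remaining projection gates are garbled honestly. All labels $W_i^{\beta}=W_i^{0}\oplus \beta\cdot\vec{R}_{\length(i)}$ still exist as formal objects, and one checks decoding still returns $y$ in every $\Hybrid_j$: for $i\in\Outputs$, $d_i\oplus\lsb(W_i^{v_i})=\lsb(W_i^{0})\oplus\lsb(W_i^{0})\oplus v_i=v_i=y_i$ regardless of how $W_i^{0}$ was chosen. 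Since XOR gates induce no change, a routine but careful comparison gives $\Hybrid_q \equiv \Simulator$'s output distribution (same random active labels, same propagation, same $GC$ and $d$ as functions of them), so
\[
\adv{prv.sim} \;\le\; \sum_{j:\ \G_j\ \text{projection}} \bigl|\Pr[\Hybrid_{j-1}\Rightarrow 1]-\Pr[\Hybrid_j\Rightarrow 1]\bigr| .
\]

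It remains to bound each summand by $\adv[\mathcal{R}_{\bar{n}}]{$\bar{n}$-TCCR}$. Fix a projection gate $\G_j=\Proj_\phi$ with $n$-bit input $a$, $m$-bit output $c$, $n,m\le\bar n$. Build a $\bar n$-TCCR distinguisher $\mathcal{B}$ whose hidden offset $\vec{R}$ plays the role of the circuit offset $\vec{R}_n$ (and of $\vec{R}_m$ when $m=n$): $\mathcal{B}$ samples every other offset $\vec{R}_{n'}$ ($n'\ne n$) and every active label itself, computes every \emph{active} garbled row directly (it knows the active input labels and $\hash$ is public), and obtains every hash value on a \emph{non-active} $n$-bit label from the oracle via $\hash(W^{v}\oplus\vec{\delta}\cdot\vec{R}_n,t)\oplus\vec{\gamma}\cdot\vec{R}_n=\oracle(W^{v},t,\vec{\delta},\vec{\gamma})$ with $\vec{\delta}=x\oplus v\ne 0^n$ and $\vec{\gamma}$ carrying the output-side offset term when $m=n$. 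Writing the honest inactive rows of $\G_j$ as $\hash(W_a^{v_a}\oplus\vec{\delta}\cdot\vec{R}_n,c)\oplus W_c^{v_c}\oplus(\phi(x)\oplus\phi(v_a))\cdot\vec{R}_m$, and the row-reduced value of $W_c^{v_c}$ as $\hash(W_a^{x^*}\!,c)\oplus(\phi(v_a)\oplus\phi(x^*))\cdot\vec{R}_m$ with $x^*=\lsb_n(W_a^{0^n})$, one sees that a real oracle makes $\mathcal{B}$ reproduce $\Hybrid_{j-1}$ exactly, whereas the ideal oracle turns these inactive rows and the resampled $W_c^{v_c}$ into fresh uniform strings, i.e.\ exactly $\Hybrid_j$; moreover $\vec{\delta}=0^n$ is never queried, distinct rows use distinct $(m,t,\vec{\delta})$, and distinct projection gates use distinct tweaks, so every query is legal. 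Hence $\adv{prv.sim}\le q\cdot\adv[\mathcal{R}_{\bar{n}}]{$\bar{n}$-TCCR}$, which is negligible since $\hash$ is $\bar n$-TCCR secure and $\bar n\ll\seck$ keeps the $n$-TCCR advantage small for every bit-length used.

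I expect the main obstacle to be precisely the offset bookkeeping and legality of the oracle queries, not any single inequality. Because an already-processed gate's output labels (still carrying an additive offset) reappear as hash \emph{inputs} of later gates, $\mathcal{B}$ must route \emph{all} circuit-wide hash evaluations on non-active $n$-bit labels through the oracle, not just those inside $\G_j$; this is where the tweak and the circularity built into $n$-TCCR are genuinely used. The delicate case is $m=n$, where the output offset coincides with the protected challenge offset, so output-side terms such as $W_c^{0^m}=W_c^{v_c}\oplus v_c\cdot\vec{R}_n$ cannot be formed in the clear and must be folded into the $\vec{b}$-argument of a single oracle call per table entry to avoid an illegal repeat of an $(m,t,\vec{a})$ triple with two different $\vec{b}$'s. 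The remaining points — the $2^{-n}$ event that the active row equals the row-reduced zero row, the observation that $\hash(W_a^{x^*},c)$ occurs nowhere else in the adversary's view so replacing it by a random value really does decorrelate $W_c^{v_c}$, and verifying $\Hybrid_q\equiv\Simulator$ exactly — are routine.
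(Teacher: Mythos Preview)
Your simulator and the correctness/decoding checks are essentially the paper's. The genuine gap is the hybrid structure. A gate-by-gate sequence $\Hybrid_0,\dots,\Hybrid_q$ cannot be bridged one projection gate at a time by an $\bar n$-TCCR reduction, and the sentence you flag as the ``main obstacle'' is precisely where the argument breaks rather than merely becomes delicate. When you build $\mathcal{B}$ for the step $\Hybrid_{j-1}\to\Hybrid_j$ you do not know the challenge offset $\vec{R}_n$, so --- as you yourself note --- every later, still-honest gate with an $n$-bit input wire must also have its inactive rows produced via oracle queries. But then replacing the real oracle by the ideal one randomises the inactive rows of \emph{all} those later gates simultaneously, not only those of $\G_j$: in the ideal world your $\mathcal{B}$ outputs (a restriction of) $\Hybrid_q$, not $\Hybrid_j$. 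Hence no single $\mathcal{B}$ bounds $|\Pr[\Hybrid_{j-1}]-\Pr[\Hybrid_j]|$, and the telescoping sum is never discharged. The reason is structural: the scheme uses one global $\vec{R}_n$ across all gates of that width, so an $n$-TCCR challenge can only be swapped in or out \emph{globally}, not per gate.

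The paper therefore does not telescope over gates. Its simulator is written with the ideal $\bar n$-TCCR function $\rand$ explicitly standing in for $\hash$ on every inactive row of every projection gate, and the hybrid chain has only three links: (i)~relabel the simulator's active path from the all-zero values to the true wire values $v_i$ --- a perfectly distribution-preserving rewrite, since both choices of active label are uniform and the $\rand$ outputs are fresh either way; (ii)~replace \emph{all} $\rand$ calls by the real construction $\hash(\cdot\oplus\vec a\cdot\vec R,\cdot)\oplus\vec b\cdot\vec R$ in one shot, which is the only computational step and costs a single $\bar n$-TCCR advantage rather than $q$ of them; (iii)~undo the $v_i$-indexing to recover exactly $\Garble$. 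If you want to salvage your write-up, collapse your $q$ hybrids into one in which every inactive row of every projection gate is $\rand(W_a^{v_a},i,\vec a,\vec b)\oplus W_i^{v_i}$, and reduce $\Garble$ to that directly; the per-row legality checks you already sketched (distinct tweaks, $\vec a\neq 0$, no repeated $(m,t,\vec a)$ with different $\vec b$) then apply verbatim across the whole circuit. A minor side point: the paper's formal $\Garble$ (Fig.~\ref{fig:scheme:scheme}) and its simulator sample $W_i^{0^m}$ fresh and do not bake row reduction into the proof, which spares you the $2^{-n}$ case split you introduced.
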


\begin{figure}
   	\centering
   	\begin{framed}
    \algrenewcommand\algorithmicindent{0.5em}%
   	\begin{algorithmic}[1]
   	    \small
   		\Function{$\Simulator$}{$f,y$}
   		\For{$i \in \Inputs$}
   			\State $W_i^{0^{\length(i)}} \sample \{0,1\}^k$
   			\State $X_i \gets W_i^{0^{\length(i)}}$
   		\EndFor
   		\For{$\G_i \in \Gates$}
   			\If{$\G_i = XOR$}
   				\Comment $\G_i$ with $n$-bit input wires $a$, $b$
   				\State $W_i^{0^n} \gets W_a^{0^n} \oplus W_b^{0^n}$
   			\Else
   				\Comment $\G_i$ with $n$-bit input wire $a$ and output\Statex\Comment size $m$ of the function $\G_i$ realizes
   				\State $W_i^{0^m} \sample \{0,1\}^k$
   				\State $GC[i,\lsb_n(W_a^{0^n})] \gets \hash_{n,m}(W_a^{0^n}, i) \oplus W_i^{0^m}$ \label{alg:security:simulator:active-path}
   			\For{$x \ne 0^n \in  \{0,1\}^n$}
   				\State $GC[i,\lsb_n(W_a^{0^n}) \oplus x]$
                \Statex \hspace*{\fill} $\gets \rand_{n,m}(W_a^{0^n},i,x,0^m) \oplus W_i^{0^m}$ \label{alg:security:simulator:proj-loop}
   			\EndFor
   			\EndIf
   		\EndFor
   		\For{$i \in \Outputs$} \label{alg:security:simulator-output-start}
   			\State $d_i \gets \lsb_n(W_i^{0^n}) \oplus y_i$
   		\EndFor \label{alg:security:simulator-output-end}
   		\State \Return $GC, X, d$
   		\EndFunction
   	\end{algorithmic}
   	\end{framed}
   	\caption{Simulator $\Simulator$.}
   	\label{fig:simulator}
\end{figure}
\begin{proof}
We define a simulator $\mathcal{S}$ (see Fig.~\ref{fig:simulator}) and show through a series of hybrids that the output of $\Simulator$ is indistinguishable for an adversary from the output of $\Garble$.
We require $\bar{n} \ll \kappa$, i.e., the largest bit length $\bar{n}$ used in a wire in the circuit is small compared to the security parameter $\kappa$, to ensure that for any adversarially chosen circuit, both the garbling scheme and the simulator run in polynomial time. In the following, we use $\rand_{n,m}(m,t,\vec{a},\vec{b})$ where $\vec{a} \in \{0,1\}^n$ and $\vec{b} \in \{0,1\}^m$ which can be constructed from $\bar{n}$-TCCR $\rand$ by padding the a/b inputs with zeros to reach the full length of $\bar{n}$, e.g., $\rand(m,t,\vec{a}||0^{\bar{n}-n}, \vec{b}||0^{\bar{n}-m})$ since $\bar{n} \ge n,m$ in the whole circuit. The same can be done for $\hash_{n,m}$ and $\hash$.
When evaluating a garbled circuit, let the assignment of active labels to the wires be called the active path, i.e., for input wires, the active labels are retrieved via OT, for gate outputs, the active wires are retrieved by decrypting the row denoted by the point-and-permute bits.
	
The idea of the simulator is to produce a garbled circuit with a fixed active path. The simulator chooses the wire labels such that
\begin{itemize}
	\item the garbled input $X$ that is handed to the adversary corresponds to $0^p$;
	\item the active label on each gate's output wire that the adversary obtains if they choose to evaluate the circuit with $X$ is $W^{0^n}$ (see Line~\ref{alg:security:simulator:active-path} in Fig.~\ref{fig:simulator}).
\end{itemize} 
The simulator adapts the decoding information s.t. if the garbled output is $W^{0^n}$, the expected output $y$ is decoded.
    
\begin{figure}
	\centering
	\begin{framed}
    \algrenewcommand\algorithmicindent{0.5em}%
	\begin{algorithmic}[1]
	    \small
		\Function{EvalWires}{$f,x$}
		\For{$i \in \Inputs$}
			\State $v_i \gets x_i$
		\EndFor
		\For{$\G_i \in \Gates$}
			\State $a,b \gets in(\G_i)$
			\If{$\G_i = XOR$}
				\Comment $\G_i$ with $n$-bit input wires $a$, $b$
				\State $v_i \gets v_a \oplus v_b$
			\Else
				\Comment $\G_i$ with $n$-bit input wire $a$
				\State $v_i \gets \phi(v_a)$ \Comment and $\phi: \{0,1\}^n \rightarrow \{0,1\}^m$
			\EndIf
		\EndFor
		\State \Return $v$
		\EndFunction
		\Statex
		
		\Function{$\Hybrid_1$}{$f,x$}
		\State $v \gets \textsf{EvalWires}(f,x)$
		\For{$i \in \Inputs$}
			\State $W_i^{\new{v_i}} \sample \{0,1\}^k$
			\State $X_i \gets W_i^{\new{v_i}}$
		\EndFor
		\For{$\G_i \in \Gates$}
			\If{$\G_i = XOR$}
				\Comment $\G_i$ with $n$-bit input wires $a$, $b$
				\State $W_i^{\new{v_i}} \gets W_a^{\new{v_a}} \oplus W_b^{\new{v_b}}$
			\Else
				\Comment $\G_i$ with $n$-bit input wire $a$
				\State $W_i^{\new{v_i}} \sample \{0,1\}^k$ \Comment and $\phi: \{0,1\}^n \rightarrow \{0,1\}^m$
				\State $GC[i,\lsb_n(W_a^{\new{v_a}})] \gets \hash_{n,m}(W_a^{\new{v_a}}, i) \oplus W_i^{\new{v_i}}$ \label{alg:security:hybrid3:exception}
			\For{$x \ne \new{v_a} \in \{0,1\}^n$}
				\State $GC[i,\lsb_n(\new{W_a^{v_a}}) \oplus x] \gets$
                \Statex \hfill $\rand_{n,m}(W_a^{\new{v_a}}, i, \new{v_a \oplus x}, \new{\phi(v_a \oplus x)}) \oplus W_i^{\new{v_i}}$
			\EndFor
			\EndIf
		\EndFor
		\For{$i \in \Outputs$}
			\State $d_i \gets \lsb_n(W_i^{\new{v_i}}) \oplus y_i$
		\EndFor
		\State \Return $GC, X, d$
		\EndFunction
	\end{algorithmic}
	\end{framed}
	\caption{Hybrid $\Hybrid_1$. The simulator from the perspective of the evaluator where $x$ is a black box value. Values in a box $\new{v_i}$ highlight the difference between $\Simulator$ and $\Hybrid_1$.}
	\label{fig:hybrid1}
\end{figure}

\begin{description}
	\item[$\mathcal{S} \approx \Hybrid_1$.]
	Hybrid $\Hybrid_1$ (see Fig.~\ref{fig:hybrid1}) describes the simulator from the perspective of the evaluator.
	Let $x$ be the input that the adversary chooses in the game. We view $x$ as a black box as it is unknown. Suppose we evaluated the circuit on $x$ in plaintext. We denote $v_i$ as the active value on wire $i$. Instead of fixing the active path on labels $W^{0^n}$, we fix it on $W^{v_i}$.\\
	The output values $GC,d$ and the outputs of $\Simulator$ are identically distributed as $W^{0^n}$ and $W^{v_i}$ are both distributed uniformly at random. Further, the change of input arguments,
     \begin{align*}
         &\rand_{n,m}(W_a^{0^n},i,x,0^m)\\
         &\qquad \approx \rand_{n,m}(W_a^{v_a}, i, v_a \oplus x, \phi(v_a \oplus x)) \enspace,
     \end{align*}
	does not change the distribution since all inputs $(x,0^m),\,\forall x \in \{0,1\}^n \ne 0^n$ and $(v_a \oplus x,\phi(v_a \oplus x)),\, \forall x \in \{0,1\}^n \ne v_a$, respectively, are unique and therefore amount to fresh randomness from the oracle, irrespective of $\phi$.
		
	\item[$\Hybrid_1 \approx \Hybrid_2$.]
	In hybrid $\Hybrid_2$, we replace $\rand_{n,m}$ by the real construction $\hash_{n,m}(m \oplus \vec{a}\cdot\vec{R}_n, t) \oplus \vec{b}\cdot\vec{R}_m$. This change is indistinguishable for the adversary by the definition of the $\bar{n}$-TCCR secure function $\hash$ (see Definition~\ref{def:n-tccr})
	\begin{align*} 
		&\rand_{n,m}(W_a^{v_a},i,v_a \oplus x,\phi(v_a \oplus x)) \\ 
        & \hfill \approx \hash_{n,m}(W_a^{v_a} \oplus (v_a \oplus x)\cdot \vec{R}_n, i) \oplus \phi(v_a \oplus x)\cdot\vec{R}_m\,.
    \end{align*}
	\item[$\Hybrid_2 \approx \Hybrid_3$.]
	In hybrid $\Hybrid_3$ (see Fig.~\ref{fig:hybrid3}), we no longer compute the wire values $v_i$ explicitly from the black-box input $x$. We fix an encoding for $v_i$, namely $v_i = 0^n$.\\
	For the input wires, note that $x_i = v_i$ by definition of \textsc{EvalWires}, so $X_i \gets W_i^{x_i}$ instead of $W_i^{v_i}$.\\
	Further, the ciphertext indexing $GC[i,\cdot]$ (Line~\ref{alg:security:hybrid3:loop} in Fig.~\ref{fig:hybrid3}) is identical after the re-write. In $\Hybrid_2$,
	\[
	    \lsb_n(W_a^{v_a}) \oplus x = \lsb_n(W_a^{0^n}) \oplus x\,,
	\]
	and in $\Hybrid_3$,
	\[
	    \lsb_n(W_a^x) = \lsb_n(W_a^{0^n}) \oplus \lsb_n(x \cdot \vec{R}_n) = \lsb_n(W_a^{0^n}) \oplus x
	\] by definition of $\vec{R}_n$.
	In the output of all gates $\G_i$, we now maintain the invariant with $x \in \{0,1\}^n$
	\begin{align*}
		W_i^{v_i} & \text{ becomes } W_i^{0^n}\,, \\
		W_i^{v_i} \oplus (v_i \oplus x) \cdot \vec{R}_{\length(i)} & \text{ becomes } W_i^{0^n} \oplus x\cdot\vec{R}_{\length(i)}\,.
	\end{align*}
	And for the decoding information, first note that for $i \in \Outputs~v_i = y_i$,
	thus in $\Hybrid_2$ (we denote $\length(i) = n$), 
	\[
	d_i \oplus \lsb_n(W_i^{vi}) = \lsb_n(W_i^{v_i}) \oplus \lsb_n(W_i^{v_i}) \oplus y_i = y_i \,,
	\]
	and in $\Hybrid_3$:
	\begin{align*}
		d_i \oplus \lsb_n(W_i^{y_i}) & = \lsb_n(W_i^{0^n}) \oplus \lsb_n(W_i^{y_i}) \\
		& = \lsb_n(W_i^{0^n}) \oplus \lsb_n(W_i^{0^n}) \oplus y_i \\
		& = y_i\,.
	\end{align*}
	The decoding information in $\Hybrid_2$ and $\Hybrid_3$ yield correct results when used with their respective garbled inputs. $d_{\Hybrid_2}$ and $d_{\Hybrid_3}$ are both uniformly distributed as $\lsb_{\length(i)}(W_i^{vi})$ resp. $\lsb_{\length(i)}(W_i^{0^{\length(i)}})$ are distributed at random. So $d_{\Hybrid_2}$ and $d_{\Hybrid_3}$ remain indistinguishable.
\end{description}
We conclude the proof by noting that $\Hybrid_3$ and \Garble~yield \emph{identical} outputs in the prv.sim game. This can easily be seen when the exceptional case for $x = 0^n$ (Line~\ref{alg:security:hybrid3:exception} in Fig.~\ref{fig:hybrid3}) in the projection gates part is incorporated into the loop and the computation of $d$ is re-written, $\Hybrid_3$ is a description of the \Garble~function.
\end{proof}

\begin{figure}
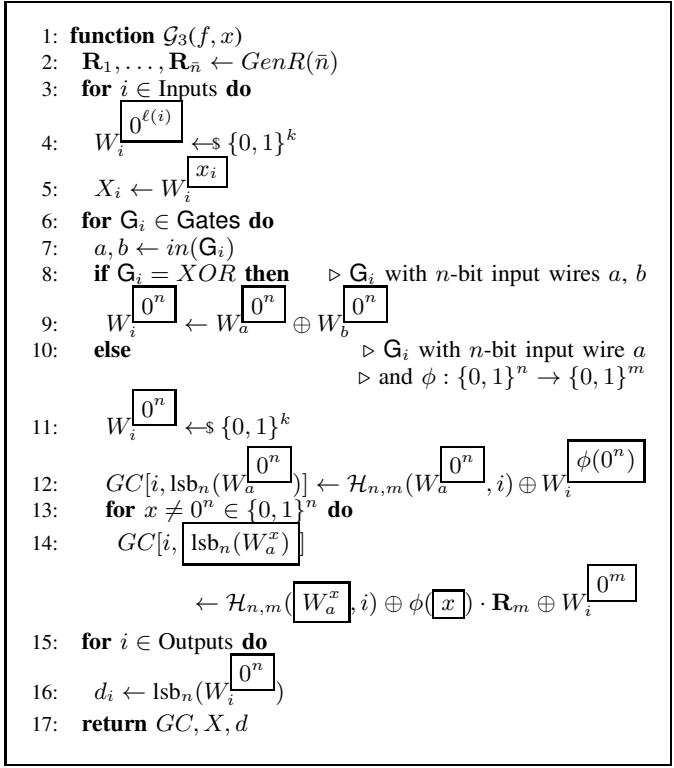

	\centering
	\begin{framed}\small
    \algrenewcommand\algorithmicindent{0.5em}%
	\begin{algorithmic}[1]
		\Function{$\Hybrid_3$}{$f,x$}
		\State $\vec{R}_1, \ldots, \vec{R}_{\bar{n}} \gets GenR(\bar{n})$
		\For{$i \in \Inputs$}
			\State $W_i^{\new{0^{\length(i)}}} \sample \{0,1\}^k$
			\State $X_i \gets W_i^{\new{x_i}}$
		\EndFor
		\For{$\G_i \in \Gates$}
			\State $a,b \gets in(\G_i)$
			\If{$\G_i = XOR$}
				\Comment $\G_i$ with $n$-bit input wires $a$, $b$
				\State $W_i^{\new{0^n}} \gets W_a^{\new{0^n}} \oplus W_b^{\new{0^n}}$
			\Else
				\Comment $\G_i$ with $n$-bit input wire $a$
                \Statex\Comment and $\phi: \{0,1\}^n \rightarrow \{0,1\}^m$
				\State $W_i^{\new{0^n}} \sample \{0,1\}^k$
				\State $GC[i,\lsb_n(W_a^{\new{0^n}})] \gets \hash_{n,m}(W_a^{\new{0^n}}, i) \oplus W_i^{\new{\phi(0^n)}}$
				\For{$x \ne 0^n \in \{0,1\}^n$}
					\State $GC[i,\new{\lsb_n(W_a^x)}]$
                    \Statex \hspace*{\fill} $\gets \hash_{n,m}(\new{W_a^x}, i) \oplus \phi(\new{x}) \cdot \vec{R}_m \oplus W_i^{\new{0^m}}$ \label{alg:security:hybrid3:loop}
				\EndFor
			\EndIf
		\EndFor
		\For{$i \in \Outputs$}
			\State $d_i \gets \lsb_n(W_i^{\new{0^n}})$ \del{$\oplus~y_i$}
		\EndFor
		\State \Return $GC, X, d$
		\EndFunction
	\end{algorithmic}
	\end{framed}%
	\caption{Hybrid $\Hybrid_3$. We fix the encoding of $W_i^{v_i}$ to $W_i^{0^n}$. Values in a box $\new{0^n}$ highlight the difference between $\Hybrid_2$ and $\Hybrid_3$.}
	\label{fig:hybrid3}
\end{figure}

\subsection{Obliviousness}
\label{sec:security:obv.sim}
The notion of obv.sim expresses that the adversary cannot learn any information given the garbled circuit $GC$ and all input wire labels $X$. Unlike the privacy notion, the adversary does not have access to the decoding information $d$.
\begin{theorem}
	Given a $\bar{n}$-TCCR secure hash function $\hash$ and $\bar{n} \ll \kappa$, the garbling scheme $\scheme$ is obv.sim secure.
\end{theorem}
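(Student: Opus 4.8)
The plan is to mirror the proof of Theorem~\ref{thm:security} almost verbatim, exploiting that obv.sim is strictly weaker than prv.sim: the adversary receives only $(GC, X)$ and never sees the decoding information $d$, so the circuit output $y$ plays no role in the simulation.

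First I would define the obliviousness simulator to be exactly the simulator $\Simulator$ of Fig.~\ref{fig:simulator} with the output block (Lines~\ref{alg:security:simulator-output-start}--\ref{alg:security:simulator-output-end}, which compute $d$) deleted and with the return value changed to $(GC, X)$. This modified simulator never touches $y$, so it is well defined given only $1^{\kappa}$ and $\Phi_{\textup{topo*}}(f)$: it needs the circuit topology, the identity of the XOR gates, and the bit-lengths $n, m$ of the wires at each projection gate, but not the functions $\phi$ themselves --- exactly the information already used by the privacy simulator, whose projection-gate code depends on the sizes $n, m$ only through $\rand_{n,m}$. As in the privacy proof, the condition $\bar{n} \ll \kappa$ guarantees that both $\Garble$ and the simulator run in polynomial time on any adversarially chosen circuit, since each projection gate costs at most $2^{\bar{n}}$ hash calls.

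Next I would reuse the hybrids $\Hybrid_1, \Hybrid_2, \Hybrid_3$ from the privacy proof (Figs.~\ref{fig:hybrid1} and~\ref{fig:hybrid3}), again with the $d$-computation removed and $(GC, X)$ returned in each. The three indistinguishability steps carry over unchanged. In $\Simulator \approx \Hybrid_1$ we relabel the active path from $W^{0^n}$ to $W^{v_i}$ (both uniform in $\{0,1\}^k$) and change the oracle arguments from $(x, 0^m)$ to $(v_a \oplus x, \phi(v_a \oplus x))$, which remain pairwise distinct over the loop and hence amount to fresh randomness regardless of $\phi$. In $\Hybrid_1 \approx \Hybrid_2$ we replace $\rand_{n,m}$ by the real construction $\hash_{n,m}(m \oplus \vec{a}\cdot\vec{R}_n, t) \oplus \vec{b}\cdot\vec{R}_m$, which is indistinguishable by $\bar{n}$-TCCR security of $\hash$ (Definition~\ref{def:n-tccr}). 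In $\Hybrid_2 \approx \Hybrid_3$ we re-encode the active value on every wire as $0^n$, using that the point-and-permute index $\lsb_n(W_a^x)$ is literally the same bit string in both hybrids because the lower $n$ bits of $x \cdot \vec{R}_n$ equal $x$ by the $\vec{I}_n$ block of Definition~\ref{def:scheme:wire-label-offsets}. Finally, once the exceptional row $x = 0^n$ is folded back into the loop, $\Hybrid_3$ is syntactically the $(GC, X)$ part of $\Garble$ together with $\Encode$.

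I do not expect a serious obstacle here; indeed the argument is \emph{easier} than the privacy proof, since the delicate step there --- showing that the simulated decoding information $d$ is distributed like the real one and decodes correctly against the simulated active path --- simply disappears, as $d$ is not part of the adversary's view in the obv.sim game. The one point that still needs to be checked, and it is routine, is that the reduction to $\bar{n}$-TCCR in the $\Hybrid_1 \approx \Hybrid_2$ step issues only legal queries: each oracle call is indexed by the wire label $W_a^{0^n}$ with the gate index $i$ as tweak and a distinct selector $\vec{a} = v_a \oplus x \neq 0^n$, so no query has $\vec{a} = 0$ and no pair of queries collides on $(m, t, \vec{a})$ with differing $\vec{b}$ --- exactly as in the privacy proof. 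Chaining the hybrids then yields that the simulator's output is indistinguishable from that of $\Garble$ and $\Encode$, so $\scheme$ is obv.sim secure.
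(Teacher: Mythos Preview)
Your proposal is correct and follows essentially the same approach as the paper: define the obliviousness simulator as $\Simulator$ with the $d$-computation (Lines~\ref{alg:security:simulator-output-start}--\ref{alg:security:simulator-output-end}) removed, observe that neither the simulator nor the hybrids $\Hybrid_1,\Hybrid_2,\Hybrid_3$ use $y$ in producing $(GC,X)$, and reuse the prv.sim hybrid chain with all $d$-related reasoning omitted. Your write-up is more explicit than the paper's (which dispatches the theorem in two sentences), but the underlying argument is identical.
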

\begin{proof}
	Let $\Simulator_{\textup{auth}}$ be $\Simulator$ from Fig.~\ref{fig:simulator} with the lines~\ref{alg:security:simulator-output-start}-\ref{alg:security:simulator-output-end} removed. Then we note that the computation of $GC$ and $X$ doesn't depend on $y$, neither in $\Simulator_{\textup{auth}}$ nor in one of the hybrids $\Hybrid_1, \Hybrid_2, \Hybrid_3$. We can thus use the same reasoning as for prv.sim security, omitting parts that correspond to $y$ or $d. \hfill$
\end{proof}

\subsection{Authenticity}
\label{sec:security:authenticity}
Authenticity states that an adversary cannot forge wire labels that are not obtained through evaluating the garbled circuit. Clearly, the presented scheme does not satisfy this property as any wire label is decoded to output bits. If authenticity is desired, we modify the decoding information $d$ to list hashes of all output wire labels and associations to their semantic meaning. As in \cite{Bellare2012}, the decoding function checks if the presented wire is indeed in the list $d$.
\section{Evaluation of SPN Primitives}
\label{sec:applications}
In the following, we discuss how SPN primitives with a specific structure can be implemented with our new garbling scheme and how this improves over the state-of-the-art. 
Note that we don't intend to compare the performance of the primitives among each other in MPC protocols. Instead, we focus on how each primitive can be accelerated. Consequently, we will not consider other traditional or MPC-friendly primitives, e.g.,~\cite{Albrecht2015}.
We compare the state-of-the-art garbling schemes Half-Gates~\cite{Zahur2015} as well as ThreeHalves~\cite{Rosulek2021}. Both schemes support free XOR gates and AND gates on wires holding one bit.

In SPN-based primitives, a state is updated with a round function consisting of a substitution layer, a permutation layer, a round constant and/or (round) key addition layer. SPNs are commonly used to construct block ciphers and pseudo-random permutations used, e.g., in hash functions or MAC algorithms.

We show an efficient circuit representation with projection gates for primitives that satisfy the following conditions for state and round function parts.
\begin{itemize}
	\item \textbf{State.} The state is (conceptually) split into $n$-bit cells.
	\item \textbf{Substitution Layer.} The substitution layer consists of S-boxes that are applied to each cell.
	\item \textbf{Permutation Layer.} The permutation layer can be described by a permutation on the cells and/or by a mixing matrix which encodes a fixed matrix multiplication with the state. In this paper, we focus on primitives with a binary  mixing matrix. 
	\item \textbf{Round Constant/(Round) Key Addition Layer.} The round constant or (round) key is XORed cell-wise.
\end{itemize}
With this structure, we set $\bar{n} = n$ and implement a single cell as $n$-bit wire. Each S-box in the substitution layer is replaced with an $n$-bit projection gate computing the same functionality. The permutation layer and the addition layer are expressible with XOR gates only. We illustrate how the SKINNY 4-bit S-box is expressed in our scheme and as a Boolean circuit in Fig.~\ref{fig:applications:example} as an example.

We identified nine SPN primitives in the literature that fulfill the conditions. Since the studied primitives have at most 8-bit cells, we set $\bar{n} = 8$.
These primitives include the widely-used AES standard and several established second-round and finalists of the NIST Lightweight Cryptography Competition, each offering unique hardware, performance, and energy efficiency metrics for specific real-world use cases.

\begin{figure}
    \begin{subfigure}{\linewidth}
        \centering
        \begin{tikzpicture}
        \node[rectangle, draw] (proj) {$\Proj_\phi$};
        \coordinate[left=1.5cm of proj.west] (input);
        \coordinate[right=2cm of proj.east] (output);
        \draw (input) -- node[above,xshift=-1cm]{\scriptsize$W_a^0 \oplus b_1 R_1 \oplus \dots \oplus b_3 R_3$} (proj);
        \draw (proj) --  node[above,xshift=0.7cm]{\scriptsize$W_b^0 \oplus b_1' R_1 \oplus \dots \oplus b_3' R_3$}(output);
    \end{tikzpicture}
    \caption{Implementation of the SKINNY 4-bit S-box via a 4-to-4-bit projection gate in our garbling scheme. Note that $\phi$ is the S-box lookup function.}
    \label{fig:applications:example:proj}
    \end{subfigure}

    \begin{subfigure}{\linewidth}
        \centering
        \begin{tikzpicture}[circuit logic US]
        \tikzstyle{dot}=[draw,circle,fill=black,inner sep=1pt]
        \matrix[column sep=2ex, row sep=2ex]
        {
            \coordinate (b0); & \node[xor gate,rotate=90] (xor1) {}; & \coordinate (xor1 output); & & \node[xor gate,rotate=90] (xor2) {}; & \coordinate (xor2 output); & & \node[xor gate,rotate=90] (xor3) {}; & \coordinate (xor3 output); & & \node[xor gate,rotate=90] (xor4) {}; & \coordinate (t0); \\
            \coordinate (b1); & & \coordinate (b1 output); & \coordinate (xor1 output2); & \coordinate (xor1 output3); & \coordinate (xor2 output2); & & & \coordinate (xor3 output2); & & & \coordinate (t1); \\
            & \node[and gate,inputs={ii},rotate=90] (and1) {}; & \coordinate (b2 output); & & \node[and gate,inputs={ii},rotate=90] (and2) {}; & \coordinate (and2 output); & & \node[and gate,inputs={ii},rotate=90] (and3) {}; & \coordinate (and3 output); & & \node[and gate,inputs={ii},rotate=90] (and4) {}; &\\
            \coordinate (b2); & & \coordinate (b2 output); & \coordinate (and2 input2); & & \coordinate (col2 output1); & \coordinate (and3 input2); & & \coordinate (col3 output1); & \coordinate (and4 input2); & & \coordinate (t2); \\
            \coordinate (b3); & & \coordinate (b3 output); & \coordinate (and2 input1); & & \coordinate (col2 output2); & \coordinate (and3 input1); & & \coordinate (col3 output2); & \coordinate (and4 input1); & & \coordinate (t3); \\
        };
        
        \draw ([xshift=-2ex] b0) node[above] {\scriptsize$W_a^0 \oplus b_1R$} -- (b0) -- ++(right:1mm) -| ++(south:4ex) -| (xor1.input 1);
        \draw (and1.output) -- ++(north:1mm) -| (xor1.input 2);
        \draw ([xshift=-2ex] b2) node[above] {\scriptsize $W_c^0 \oplus b_3R$} -- (b2) -| node[pos=0.5,dot] {} (and1.input 2);
        \draw ([xshift=-2ex] b3) node[below] {\scriptsize $W_d^0 \oplus b_4R$} -- (b3) -| node[pos=0.5,dot] {} (and1.input 1);

        \draw (xor1.output) -- ++(north:1mm) -- ++(right:1ex) -- (xor1 output2);
        \draw ([xshift=-2ex] b1) node[below] {\scriptsize $W_b^0 \oplus b_2R$} -- (b1) -- (b1 output) -- (and2 input2) -| node[pos=0.5,dot] {} (and2.input 2);
        \draw (b2) -- (b2 output) -- (and2 input1) -| node[pos=0.5,dot]{} (and2.input 1); 
        \draw (b3) -- (b3 output) -- ([yshift=-1mm] xor2.input 1) -- (xor2.input 1);
        \draw (and2.output) -- ++(north:1mm) -| (xor2.input 2);

        \draw (xor2.output) -- ++(north:1mm) -- ++(right:1ex) -- (xor2 output2);
        \draw (xor1 output2) -- ([xshift=-1ex] xor2 output2) -- (and3 input2) -| node[pos=0.5,dot] {} (and3.input 2);
        \draw (and2.input 2) |- (col2 output1) -- (and3 input1) -| node[pos=0.5,dot] {} (and3.input 1);
        \draw (and2 input1) -- (col2 output2) -- ([yshift=-1mm] xor3.input 1) -- (xor3.input 1);
        \draw (and3.output) -- ++(north:1mm) -| (xor3.input 2);

        \draw (xor3.output) -- ++(north:1mm) -- ++(right:1ex) -- (xor3 output2);
        \draw (xor2 output2) -- ([xshift=-1ex] xor3 output2) -- (and4 input2) -| node[pos=0.5,dot] {} (and4.input 2);
        \draw (and3.input 2) |- (col3 output1) -- (and4 input1) -| node[pos=0.5,dot] {} (and4.input 1);
        \draw (and3 input1) -- (col3 output2) -- ([yshift=-1mm] xor4.input 1) -- (xor4.input 1);
        \draw (and4.output) -- ++(north:1mm) -| (xor4.input 2);

        \draw (xor4.output) -- ++(north:1mm) -- ++(right:1ex) -- (t0) -- node[below,xshift=2mm] {\scriptsize $W_e^0 \oplus b_1'R$} ++(right:2ex);
        \draw (xor3 output2) -- (t1) -- node[below,xshift=2mm] {\scriptsize $W_f^0 \oplus b_2'R$} ++(right:2ex);
        \draw (and4 input2) -- (t2) -- node[above,xshift=2mm] {\scriptsize $W_g^0 \oplus b_3'R$} ++(right:2ex);
        \draw (and4 input1) -- (t3) -- node[below,xshift=2mm] {\scriptsize $W_h^0 \oplus b_4'R$} ++(right:2ex);
    \end{tikzpicture}
    \caption{Implementation of the SKINNY 4-bit S-box via Boolean circuit with input wires $a,b,c,d$ and output wires $e,f,g,h$. $b_1, \dots, b_4$ denote the 4 input bits from least to most significant bit. $b_1', \dots, b_4'$ denote the 4 output bits in the same order. Circuit from~\cite{Beierle2016}.}
    \label{fig:applications:example:boolean}
    \end{subfigure}
    \caption{Example to illustrate the implementation difference between our scheme and a Boolean circuit for the 4-bit SKINNY S-box. Note that Fig.~\ref{fig:applications:example:proj} encodes 4 bits \emph{per} wire while Fig.~\ref{fig:applications:example:boolean} uses 4 wires.}
    \label{fig:applications:example}
\end{figure}
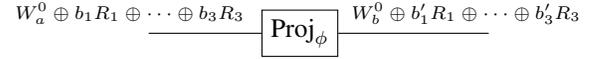
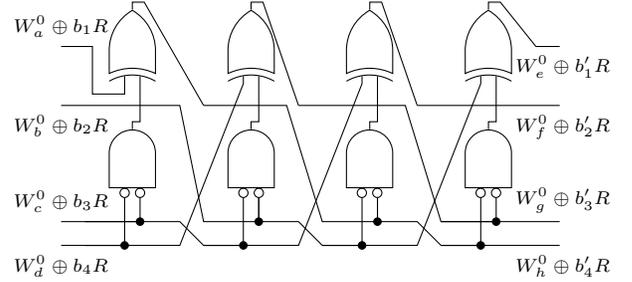

\subsection{Implementation Details}
\label{sec:applications:spn:data-path}
For the projection gates implementation, we assume that the input block is already setup in $n$-bit wires where $n$ denotes the cell size in bits. This doesn't incur additional cost since the input phase using OT can already share wire labels with the desired wire label offset, as detailed in Sect.~\ref{sec:scheme:ot}.
For the implementation using AND gates, only the S-box costs AND gates in the data path of the primitive. We selected implementations for the S-boxes with the lowest number of AND gates since their cost dominates in Half-Gates and ThreeHalves.
Table~\ref{tab:applications:spn:cipher-gate-counts} details the number of projection and AND gates for each primitive.

For the 4-bit S-box used in TWINE-80 and TWINE-128 and for the 4-bit S-box used in Midori64, MANTIS and CRAFT, we found new circuits using the smallest number of AND gates so far, reducing the number of AND gates for the TWINE S-box from 7 to 6 and for the Midori64 S-box from 8 to 4. 
We used  the heuristic optimization tool LIGHTER by Jean et al.~\cite{Jean2017} operating on a customized cost metric, for more details see Appendix~\ref{sec:appendix:sbox-formulas}.

For the key, we assume individual key bits to be available in $1$-bit wires as this eases key scheduling in many cases. Note that the cost to transform the (round) key bits into $n$-bit wires is taken into account.
In scenarios where one party knows the complete key, e.g., to offer blind symmetric encryption or decryption where the encryption or decryption is performed without learning the message and ciphertext, the key schedule does not need to be computed within the garbling scheme. 
Instead, if the garbler knows the key, they can compute the key schedule separately and insert the round keys as secret constants.
Similarly, if the evaluator knows the key, they may receive the wire labels for round keys via OT instead.

If the key is shared among the players using a linear secret-sharing scheme, for instance as $k = k_G \oplus k_E$ where $k_G$ is the garbler's share and $k_E$ is the evaluator's share, the key schedule can be computed outside of the garbling scheme by each player on their share instead for ciphers with a linear key schedule, e.g., for Piccolo, Midori, SKINNY, MANTIS and CRAFT. The resulting round key shares can then be treated as input and are recombined using only linear operations saving any gates specified in the key schedule column for the cipher. However, the gate counts presented here compute the entire key schedule of the primitive which is required in the distributed encryption/decryption scenario.

\begin{table}\scriptsize
    \setlength{\tabcolsep}{1em}
	\centering
	\caption{Detailed gate counts for setup, key schedule and data path of the selected symmetric primitives. The top entry denotes the number of AND gates while the bottom entry denotes the number of projection gates.\\
	{\footnotesize $\dagger$ Gate counts obtained from Mandal et al.~\cite{Mandal2020}.}
	}
	\label{tab:applications:spn:cipher-gate-counts}
	\begin{tabular}{llll}
		\toprule
		Primitive & Setup & Key Schedule & Data Path \\
		\midrule
        \multirow{2}{*}{AES-128~\cite{NIST2001}} & & 1280 AND & 5120 AND \\
		  & & 128 1-bit + 49 8-bit & 320 8-bit \\
		\midrule
        \multirow{2}{*}{CRAFT~\cite{Beierle2019}} & & & 1920 AND \\
		  & & 192 1-bit & 480 4-bit \\
        \midrule
        \multirow{2}{*}{Fides-80~\cite{Bilgin2013}} & & & 320 AND\\
		  & 160 1-bit & & 32 5-bit \\
		\midrule
		\multirow{2}{*}{Fides-96} & &  & 1088 AND\\
		  & 192 1-bit & & 32 6-bit \\
		\midrule
        \multirow{2}{*}{MANTIS~\cite{Beierle2016}} & &  & 896 AND \\
		  &  & 192 1-bit & 224 4-bit \\
        \midrule
        \multirow{2}{*}{Midori64~\cite{Banik2015}} & &  & 1024 AND \\
		  & & 128 1-bit & 256 4-bit \\
		\midrule
        \multirow{2}{*}{Piccolo-80~\cite{Shibutani2011}} & & & 1600 AND \\
		  & & 80 1-bit & 600 4-bit \\
		\midrule
		\multirow{2}{*}{Piccolo-128} & & & 1984 AND \\
		  & & 128 1-bit & 744 4-bit \\
		\midrule
		\multirow{2}{*}{SKINNY-64-128} & & & 2304 AND \\
		  & & 128 1-bit + 280 4-bit & 576 4-bit \\
	    \midrule
		\multirow{2}{*}{TWINE-80~\cite{Suzaki2012}} & & 432 AND & 1728 AND \\
		& & 80 1-bit + 70 4-bit & 288 4-bit \\
		\midrule
		\multirow{2}{*}{TWINE-128} & & 630 AND & 1728 AND \\
		  & & 128 1-bit + 104 4-bit & 288 4-bit \\
		\midrule
		\multirow{2}{*}{WAGE~\cite{AlTawy2020}} & & & 37745 AND$^\dagger$ \\
		  & 259 1-bit & & 777 7-bit \\
		\bottomrule
	\end{tabular}
\end{table}

\begin{table}\scriptsize
    \setlength{\tabcolsep}{1em}
	\centering
	\caption{Estimated performance difference for selected symmetric ciphers. The notation $\times x$ denotes an improvement by factor $x$ in the category with respect to the base scheme, i.e., $x > 1$ is an improvement, $x < 1$ is degradation.}
	\label{tab:applications:spn:ciphers}
	\begin{tabular}{lllll}
		\toprule
		Base Scheme & Primitive & Garble & Send & Eval \\
		\midrule
        Half-Gates~\cite{Zahur2015} & \multirow{2}{*}{AES-128~\cite{NIST2001}} & $\times0.28$ & $\times0.14$ & $\times26.23$ \\
		ThreeHalves~\cite{Rosulek2021} & & $\times0.42$ & $\times0.10$ & $\times39.34$ \\
		\midrule
        Half-Gates & \multirow{2}{*}{CRAFT~\cite{Beierle2019}} & $\times0.95$ & $\times0.52$ & $\times5.71$ \\
		ThreeHalves & & $\times1.43$ & $\times0.39$ & $\times8.57$ \\
        \midrule
        Half-Gates & \multirow{2}{*}{Fides-80~\cite{Bilgin2013}} & $\times1.23$ & $\times0.64$ & $\times15.45$ \\
		ThreeHalves & & $\times1.84$ & $\times0.48$ & $\times23.18$ \\
		\midrule
		Half-Gates & \multirow{2}{*}{Fides-96} & $\times2.10$ & $\times1.07$ & $\times50.26$ \\
		ThreeHalves & & $\times3.15$ & $\times0.81$ & $\times75.39$ \\
		\midrule
        Half-Gates & \multirow{2}{*}{MANTIS~\cite{Beierle2016}} & $\times0.90$ & $\times0.50$ & $\times4.31$ \\
		ThreeHalves & & $\times1.35$ & $\times0.38$ & $\times6.46$ \\
		\midrule
        Half-Gates & \multirow{2}{*}{Midori64~\cite{Banik2015}} & $\times0.94$ & $\times0.52$ & $\times5.33$ \\
		ThreeHalves & & $\times1.41$ & $\times0.39$ & $\times8.00$ \\
		\midrule
        Half-Gates & \multirow{2}{*}{Piccolo-80~\cite{Shibutani2011}} & $\times0.66$ & $\times0.35$ & $\times4.71$ \\
		ThreeHalves & & $\times0.98$ & $\times0.26$ & $\times7.06$ \\
		\midrule
		Half-Gates & \multirow{2}{*}{Piccolo-128} & $\times0.65$ & $\times0.35$ & $\times4.55$ \\
		ThreeHalves & & $\times0.98$ & $\times0.26$ & $\times6.83$ \\
		\midrule
		Half-Gates & \multirow{2}{*}{SKINNY-64-128} & $\times0.66$ & $\times0.36$ & $\times4.68$ \\
		ThreeHalves & & $\times0.99$ & $\times0.27$ & $\times7.02$ \\
		\midrule
		Half-Gates & \multirow{2}{*}{TWINE-80~\cite{Suzaki2012}} & $\times1.46$ & $\times0.79$ & $\times9.81$ \\
		ThreeHalves & & $\times2.19$ & $\times0.59$ & $\times14.71$ \\
		\midrule
		Half-Gates & \multirow{2}{*}{TWINE-128} & $\times1.44$ & $\times0.78$ & $\times9.05$ \\
		ThreeHalves & & $\times2.16$ & $\times0.59$ & $\times13.58$ \\
		\midrule
		Half-Gates & \multirow{2}{*}{WAGE~\cite{AlTawy2020}} & $\times1.51$ & $\times0.76$ & $\times72.87$ \\
		ThreeHalves & & $\times2.27$ & $\times0.57$ & $\times109.30$ \\
		\bottomrule
	\end{tabular}
\end{table}

\subsection{Performance}\label{sec:applications:performance}
\setlength{\tabcolsep}{.75em}
\begin{table}\scriptsize
    \centering
    \caption{Performance benchmark results for some SPN-ciphers comparing garbling and evaluation time as well as the circuit size. All reported numbers are amortized from 500 (for SKINNY-128-*) and 1000 parallel primitive calls averaged over 10 repetitions.}
    \label{tab:applications:spn:implementations}
    \begin{tabular}{lrrrr}
        \toprule
        Base Scheme & Primitive & Garble & Circuit size & Eval \\
        & & in ms & in KB & in ms \\
        \midrule
        Half-Gates~\cite{Zahur2015} & \multirow{3}{*}{AES-128~\cite{NIST2001}} & 0.767 & 204.93 & 0.722 \\
        ThreeHalves~\cite{Rosulek2021} & & \textbf{0.436} & \textbf{156.51} & 0.305 \\
        This work & & 0.928 & 1242.55 & \textbf{0.016} \\
        \midrule
        Half-Gates & \multirow{3}{*}{MANTIS$_7$~\cite{Beierle2016}} & 0.093 & 32.84 & 0.070 \\
        ThreeHalves & & \textbf{0.083} & \textbf{30.96} & 0.077 \\
        This work & & 0.133 & 76.36 & \textbf{0.040} \\
        \midrule
        Half-Gates & \multirow{3}{*}{SKINNY-64-128} & 0.283 & 73.80 & 0.194 \\
        ThreeHalves & & \textbf{0.153} & \textbf{59.13} & 0.096 \\
        This work & & 0.289 & 139.30 & \textbf{0.026} \\
        \midrule
        Half-Gates & \multirow{3}{*}{SKINNY-64-192} & 0.343 & 81.99 & 0.246 \\
        ThreeHalves & & \textbf{0.166} & \textbf{67.11} & 0.118 \\
        This work & & 0.321 & 154.69 & \textbf{0.041} \\
        \midrule
        Half-Gates & \multirow{3}{*}{SKINNY-128-128} & 0.595 & 163.98 & 0.440 \\
        ThreeHalves & & \textbf{0.346} & \textbf{126.66} & 0.279 \\
        This work & & 2.281 & 2613.28 & \textbf{0.015} \\
        \midrule
        Half-Gates & \multirow{3}{*}{SKINNY-128-256} & 0.803 & 196.74 & 0.594 \\
        ThreeHalves & & \textbf{0.442} & \textbf{154.71} & 0.348 \\
        This work & & 2.563 & 3135.62 & \textbf{0.028} \\
        \midrule
        Half-Gates & \multirow{3}{*}{SKINNY-128-384} & 1.107 & 229.51 & 0.819 \\
        ThreeHalves & & \textbf{0.558} & \textbf{182.77} & 0.472 \\
        This work & & 2.841 & 3658.00 & \textbf{0.041} \\
        \midrule
        Half-Gates & \multirow{3}{*}{TWINE-128~\cite{Suzaki2012}} & 0.202 & 75.52 & 0.168 \\
        ThreeHalves & & \textbf{0.136} & \textbf{60.40} & 0.081 \\
        This work & & 0.191 & 108.21 & \textbf{0.059} \\
        \midrule
        Half-Gates & \multirow{3}{*}{TWINE-80} & 0.187 & 68.80 & 0.153 \\
        ThreeHalves & & \textbf{0.128} & \textbf{53.99} & 0.074 \\
        This work & & 0.199 & 99.04 & \textbf{0.045} \\
        \bottomrule
    \end{tabular}
\end{table}

\subsubsection{Instantiating $\hash$}
Before comparing our garbling scheme with Half-Gates and ThreeHalves, we need to discuss the instantiation of the $n$-TCCR hash function $\hash$. Half-Gates uses a construction called TCCR for naturally derived keys, while ThreeHalves uses a randomized TCCR function. Despite minor differences, both constructions use one call to an ideal permutation (which is instantiated using fixed-key AES), which is the main computation cost. In the following comparison, we also assume that $\hash$ can be instantiated in a construction using only one ideal permutation call. For the practical benchmark, we use the same construction as Half-Gates.
However, we need to stress a major difference: Since our wire labels are $\kappa+n$ bits long and we still want to fit them into one permutation call, the security of our scheme is reduced by $n$ bits with this instantiation. Concretely, since all studied primitives have $\bar{n} \le 8$, Half-Gates and ThreeHalves have $\kappa = 127$ while our scheme has $\kappa=120$ for AES with 128-bit block size.
Should a larger security level be desired, it is also possible to encrypt the pointer bits separately with another permutation call using a dedicated construction of $\hash$ that has an output length of $\kappa+n$. We leave such construction and the required dedicated security analysis to future work.

\subsubsection{Evaluation Performance}
The gate counts from Table~\ref{tab:applications:spn:cipher-gate-counts} can be turned into calls to $\hash$ and sent ciphertexts. In Half-Gates, each AND gate costs 4 calls to $\hash$ for garbling, 2 ciphertexts are sent, and 2 calls to $\hash$ for evaluation. In ThreeHalves, each AND gate costs 6 calls to $\hash$ for garbling, 1.5 ciphertexts are sent, and 3 calls to $\hash$ for evaluation.

Table~\ref{tab:applications:spn:ciphers} lists all studied primitives with the corresponding trade-off in garbling and communication cost, and evaluation improvement measured in the number of calls to $\hash$ and in the number of ciphertexts, respectively. We found three primitives in five configurations in total where our scheme improves in both garbling and evaluation cost over both reference garbling schemes\footnote{While the evaluation improvement is the goal of this work, the improvement in garbling time is caused by large, sub-optimal circuit representations of the 5-, 6- and 7-bit S-boxes for which, to the best of our knowledge, no smaller circuits are published in the literature.}. 
In the remaining primitives and cases, projection gates trade off higher garbling and communication cost for faster evaluation performance. Note that for most primitives, the evaluation improvement is much higher than the additional communication cost. E.g., for Midori64, at a cost of slightly more garbling work ($\approx 6\%$ more) and less than twice the number of sent ciphertexts, we improve the evaluation work by a factor of five. We detail the implementation approach with projection gates for the ciphers in Appendix~\ref{sec:appendix:ciphers}.

\subsubsection{Experimental Results}
Next, we experimentally compared the performance of four primitives in nine configurations in Half-Gates, ThreeHalves and our scheme. ThreeHalves has been implemented by Hamacher et al.~\cite{Hamacher2022} in the MOTION framework while Half-Gates and our scheme have been implemented in MP-SPDZ~\cite{Keller2020}. We also use the TMMO hash function construction that is already used for Half-Gates in MP-SPDZ for our scheme. Our code is publicly available\footnote{\codelocation}.
All used implementations already perform multi-threading\footnote{Garbler and evaluator are restricted to 4 threads each.} to accelerate the computation of gate garbling/evaluation as well as grouping together AES calls for improved pipe-lining behaviour. Batched/vectorized AES~\cite{Muench2021} may provide additional speed-up. The AES calls for our scheme are independent within each gate while the regular structure of the SPN primitives, e.g., all S-boxes are parallel, also allows simple batching for Half-Gates and ThreeHalves. We do not expect a drastic difference in how our scheme compares to Half-Gates and ThreeHalves when using vectorized AES.
Table~\ref{tab:applications:spn:implementations} lists the garbling and evaluation time, and the circuit size. Garbling and evaluation time are wall-clock running times as reported by the frameworks, and circuit size is the number of bytes sent/received (whichever is higher) as reported by the frameworks. Garbler and evaluator were run on the same machine (6-core/12-threads AMD Ryzen 5 PRO 4650U 2.1 GHz with 8 GB RAM), connected over localhost, and limited to 4 threads each. We achieve a considerable speed-up in evaluation time of, e.g., factor 20 to 45 for AES. The expected trade-off of faster evaluation and larger circuit size is immediate for all implemented ciphers. Even though they were benchmarked in the same test environment and hardware, we observed differences in garbling and evaluation time between Half-Gates and ThreeHalves executions of the same circuit which cannot be explained by the differing number of hash function calls. We believe the observations are due to the implementation in the two MPC frameworks which have differing overhead.

\subsection{Applications}\label{sec:applications:details}
In this section, we briefly describe two potential applications where fast evaluation of SPN primitives is crucial. 
\subsubsection{IoT-to-Cloud Secure Computation}
In an Internet of Things (IoT) to Cloud scenario, the focus is on encrypting data at the source, specifically on the IoT devices, employing an SPN primitive and efficient distributed decryption in the cloud prior to privacy-preserving computation on the data (cf.~\cite[Application 4]{Pinkas2009}). Such setup facilitates end-to-end secure data collection and processing. 
The distributed decryption should have low computation latency (and thus fast evaluation of the garbled circuit) in order to minimize the online phase of the combined protocol (distributed decryption and processing).
With our proposed garbling scheme, IoT-friendly lightweight primitives can be used for efficient encryption on resource-constraint devices and still obtain fast evaluation times for distributed decryption if garbled circuits are pre-processed.

\subsubsection{Distributed Kerberos} The Kerberos authentication protocol is widely used, e.g., as one authentication method for Microsoft Windows. The key distribution center (KDC) plays a critical role in the architecture since it stores the secret keys of clients and service servers, and the secret key of the ticket-granting server (TGS). During the different stages of authentication, the KDC decrypts messages using the TGS secret key, and also encrypts messages with the client, TGS, and service server secret keys.
Compromise of the KDC has devastating consequences as critical key material is leaked, which breaks all authentication goals. The security of keys can be increased by distributing the KDC among two servers $G,E$ that each only stores a secret-share of each stored key, e.g., $k = k_G \oplus k_E$. Consequently, the necessary encryption and decryption operations are computed using our garbling scheme where the two servers input their respective key share as private input. Clearly, clients only authenticate to the distributed KDC infrequently, allowing the garbling server $G$ to garble circuits for all required operations
\begin{itemize}
    \item encryption of the client session key using a client key (as input),
    \item encryption of the ticket granting ticket using the TGS secret key ($G$ can fix $k_G$ in the circuit),
    \item encryption of the client-service session key using a service server's key (as input) and
    \item decryption of the client session-key using the TGS secret key ($G$ can again fix $k_G$ in the circuit)
\end{itemize}
ahead of time and send the garbled circuit to $E$. The necessary setup for OT extensions can be performed similarly.
The resulting online phase either requires no input from $G$ or just the respective key share of the client/service server in question, and $E$ obtains the output without interaction with $G$. Therefore, a client only needs to send requests to $E$ and benefits from reduced circuit evaluation time to complete the entity authentication. The obtained speed-up of primitive evaluation for, e.g., AES, directly translates to its use in modes of operations specified in Kerberos.
The KDC has already been distributed in the three-party honest-majority setting~\cite{Araki2016} with the goal of high throughput. The authors used AES in counter mode to improve performance. Since garbled circuits are constant-round protocols, we can use Kerberos' original specification, AES-CTS (CBC ciphertext stealing).
\section{Conclusion}
\label{sec:conclusion}
We presented a garbling scheme that encodes $n$-bit strings per wire. It generalizes the idea of FreeXOR and integrates seamlessly into state-of-the-art schemes with FreeXOR on the 1-bit wire level. Projection gates can be used to convert strings from $n$- to $m$-bit or to compute arbitrary $n$- to $m$-bit functions, while XOR is free.
We prove the scheme secure under the assumption of a $n$-TCCR secure hash function $\hash$, a generalization of TCCR security. Instantiating $\hash$ with a dedicated construction is an interesting open direction since it requires input and output of $\kappa+n$ bits. One possible approach is concatenating two 128-bit constructions to produce sufficient output bits but care must be taken to keep the inputs distinct such as using different tweaks and/or chaining. It is also possible to instantiate current TCCR constructions with a permutation of larger size, e.g., fixed-key Rijndael-256, and still benefit from AES-NI hardware support.

For an important application in two-party secure function evaluation, the evaluation of symmetric primitives, we show that substitution-permutation network primitives with a cell-like structure can be efficiently implemented in our scheme. Compared to AND gate-based circuits, we show a high-speed evaluation that is traded off with moderate additional garbling or communication cost.  In scenarios where the garbling scheme runs in an offline/online setting, we shift the garbling work and garbled circuit transfer to the evaluator into the pre-processing phase and thus obtain a high-speed online phase.
We obtained a considerable performance improvement, a 4- to 72-times faster online phase, for nine primitives in literature when taking hash function calls as a metric. Implementation of some ciphers shows that this evaluation performance improvement translates into practical applications.

Besides oblivious computation of SPN primitives, statements where a prover proves knowledge of a key $k$ to a pair $x,y$ s.t. $AES_k(x) = y$ are highly relevant. Garbling schemes have been used to construct efficient interactive zero-knowledge protocols that prove statements over ``unstructured'' languages expressible in Boolean circuits~\cite{Jawurek2013,Frederiksen2015}. Using our garbling scheme, proving statements involving SPN primitives would be much faster since proving equates to evaluating the garbled circuit. This is traded-off with a larger proof size.

{\appendices \section{Formulas for S-Boxes of TWINE and Midori64}\label{sec:appendix:sbox-formulas}
We use  the heuristic optimization tool LIGHTER by Jean et al.~\cite{Jean2017} operating on a customized cost metric. We restrict the tool to use only NOT, AND and XOR gates with the associated costs of 0.01, 1 and 0.01, respectively. These costs describe our setting where NOT and XOR gates are practically free, i.e., very low cost, and AND gates are expensive, i.e., high cost\footnote{Essentially, this implies that we prefer implementations using 99 NOT or XOR gates in addition to $x$ AND gates to an implementation using $x+1$ AND gates.}. The tool then searches an implementation with low total cost following a heuristic. 
This approach reduces the number of AND gates for the TWINE S-box from 7 AND gates (algebraic normal form) to 6 AND gates (see Fig.~\ref{fig:applications:spn:twine-sbox}).
For the Midori64 S-box, the number of AND gates is reduced from 8 AND gates (formula given in the specification~\cite{Banik2015}) to 4 AND gates (see Fig.~\ref{fig:applications:spn:midori-sbox}).
\begin{figure}
    \begin{subfigure}[t]{0.4\columnwidth}\small
        \begin{align*}
            a & \gets x_2 \oplus x_3 \\
    		b & \gets x_3 \oplus (\neg x_0 \land x_1) \\
    		c & \gets \neg x_0 \oplus a \oplus (x_1 \land a \land b) \\
    		d & \gets a \oplus (b \land c) \\
    		e & \gets b \oplus c \\
            f & \gets c \oplus d \\
    		x_3' & \gets x_1 \oplus e \\
    		x_2' & \gets e \oplus (d \land x_3') \\
    		x_0' & \gets c \oplus (f \land x_2') \\
    		x_1' & \gets f \oplus x_3'
        \end{align*}
        \caption{The 4-bit S-box of TWINE~\cite{Suzaki2012} computed using 6 AND gates.}
        \label{fig:applications:spn:twine-sbox}
    \end{subfigure}
    \hspace{0.01\textwidth}
    \begin{subfigure}[t]{0.4\columnwidth}\small
        \begin{align*}
            & \\
    		a & \gets \neg (x_0 \oplus x_2) \\
    		b & \gets x_0 \oplus (a \land x_3) \\
    		c & \gets x_1 \oplus b \\
    		d & \gets \neg x_3 \oplus (a \land b) \\
    		x_0' & \gets b \oplus (c \land d) \\
    		e & \gets d \oplus x_0' \\
    		x_1' & \gets a \oplus d \\
    		x_2' & \gets c \oplus e \\
    		x_3' & \gets e \oplus (x_0' \land x_2')
    	\end{align*}
    	\caption{The 4-bit S-box $\textsf{Sb}_0$ of Midori64 computed using 4 AND gates.}
    	\label{fig:applications:spn:midori-sbox}
    \end{subfigure}
    \caption{Implementation formulas for the TWINE and Midori64 S-boxes. The input bits are $x_0$ through $x_3$, the output bits are  $x_0'$ through $x_3'$.}
    \label{fig:applications:spn:twine-midori-sbox}
\end{figure}

\section{Implementation of SPN Primitives}
\label{sec:appendix:ciphers}
In the following, we give a more detailed explanation of the implementation from Tables~\ref{tab:applications:spn:cipher-gate-counts} and~\ref{tab:applications:spn:ciphers} for each primitive. 

\subsection{AES}
The key schedule of AES-128 applies 4 S-boxes per round to the state. All remaining key schedule operations can be expressed using XOR gates. The AES S-box can be computed with 32 AND gates, as described by Boyar and Peralta~\cite{Boyar2010}. In the data path, 16 S-boxes are applied per round. The ShiftRows, MixColumns and AddRoundKey steps can be expressed with XOR gates. AES-128 defines 10 rounds.

For an implementation using projection gates, we first compose the key into 8-bit wires. Then, the key schedule can be computed by replacing the S-box with a single 8-bit projection gate computing the same functionality.
For the data path, we replace S-boxes with 8-bit projection gates. The mixing step in AES cannot be described with a binary matrix alone but we re-write the MixColumns step as
\[\scriptsize
    \begin{aligned}
    &\begin{pmatrix}
        2311\\
        1231\\
        1123\\
        3112
    \end{pmatrix}
    \begin{pmatrix}
        s_0 ~ s_4 ~ s_8 ~ s_{12} \\
        s_1 ~ s_5 ~ s_9 ~ s_{13} \\
        s_2 ~ s_6 ~ s_{10} ~ s_{14} \\
        s_3 ~ s_7 ~ s_{11} ~ s_{15}
    \end{pmatrix}
    =\\
    &\quad\begin{pmatrix}
        0111\\
        1011\\
        1101\\
        1110
    \end{pmatrix}
    \begin{pmatrix}
        s_0 ~ s_4 ~ s_8 ~ s_{12} \\
        s_1 ~ s_5 ~ s_9 ~ s_{13} \\
        s_2 ~ s_6 ~ s_{10} ~ s_{14} \\
        s_3 ~ s_7 ~ s_{11} ~ s_{15}
    \end{pmatrix}
    \oplus
    \begin{pmatrix}
        1100\\
        0110\\
        0011\\
        1001
    \end{pmatrix}
    \begin{pmatrix}
        f(s_0) ~ f(s_4) ~ f(s_8) ~ f(s_{12}) \\
        f(s_1) ~ f(s_5) ~ f(s_9) ~ f(s_{13}) \\
        f(s_2) ~ f(s_6) ~ f(s_{10}) ~ f(s_{14}) \\
        f(s_3) ~ f(s_7) ~ f(s_{11}) ~ f(s_{15})
    \end{pmatrix}
    \end{aligned}
\]
where $s_0, \dots s_{15}$ are the 8-bit cells of the state and $f(s) = 2s$ computes the finite field doubling in $\mathbb{GF}(2^8)$ defined for AES.
Therefore, we compute a round of AES with $2 \cdot 16$ 8-bit projection gates.
This yields a correct result, since $s \oplus f(s) = 3s$ in $\mathbb{GF}(2^8)$.

\subsection{CRAFT}
The key and tweak bits are first composed into 4-bit wires. The remaining key schedule is linear w.r.t. 4-bit wires.

The data path is linear except for the 16 S-boxes that are applied in each of the 30 rounds. CRAFT uses the Midori $\textsf{Sb}_0$ S-box which can be computed with 4 AND gates (see Fig.~\ref{fig:applications:spn:midori-sbox}), or one 4-bit projection gate.

\subsection{Fides}
The internal state of Fides is a $4 \times 8$ grid of 5-bit and 6-bit cells for Fides-80 and Fides-96, respectively. We can compute the 5-bit S-box with 10 AND gates (see Fig.~\ref{fig:appendix:fides-5-sbox}), or one 5-bit projection gate. The 6-bit S-box may be computed with 34 AND gates expressing each output bit in algebraic normal form. This approach doesn't aim to optimise the number of AND gates used. However, we count common terms from different output bits only once since they can be shared as intermediate results. In our garbling scheme, the S-box is expressed in one 6-bit projection gate.
\begin{figure}[H]
	\begin{equation*}
    \resizebox{\columnwidth}{!}{%
		$\begin{array}{rl}
			a & \gets x_0 \land x_2 \\
			b & \gets x_1 \land x_4 \\
			c & \gets x_2 \land x_3 \\
			d & \gets x_0 \land x_4 \\
			e & \gets x_2 \land x_4 \\
			f & \gets x_1 \land x_2 \\
		\end{array}
		\begin{array}{rl}
			& \\
			x_0' & \gets \neg (x_0 \oplus x_3 \oplus b \oplus a) \\
			x_1' & \gets x_4 \oplus c \oplus  d \oplus e \oplus (x_0 \land x_1) \\
			x_2' & \gets  x_3 \oplus x_4 \oplus a \oplus d \oplus f \oplus (x_3 \land x_4) \\
			x_3' & \gets x_1 \oplus x_4 \oplus a \oplus c \oplus f \oplus (x_1 \land x_3)   \\
			x_4' & \gets x_1 \oplus x_2 \oplus x_3 \oplus b \oplus e \oplus  f  \oplus (x_0 \land x_3)
		\end{array}
    $}
	\end{equation*}
	\caption{The 5-bit S-box of Fides~\cite{Bilgin2013} can be computed with 10 AND gates. Input bits are $x_0, \dots, x_4$, output bits are $x_0', \dots , x_4'$.}
	\label{fig:appendix:fides-5-sbox}
\end{figure}

\subsection{MANTIS}
The key $k=k_0||k_1$ is expanded as defined in~\cite{Beierle2016}:
\[
	k_0 || (k_0 >>> 1) \oplus (k_0 >> 63) || k_1\,.
\]
Afterwards we compose the required 4-bit wires for the expanded key costing 192 1-bit projection gates.
MANTIS uses the Midori $\textsf{Sb}_0$ S-box, which can be computed with 4 AND gates (see Fig.~\ref{fig:applications:spn:midori-sbox}), or one 4-bit projection gate.

\subsection{Midori64}
The key bits are first composed into 4-bit wires. The key schedule can then be computed using XOR gates between the 4-bit wires.

In the data path, all steps except for the S-box can be computed with XOR gates alone. The 4-bit S-box $\textsf{Sb}_0$ can be computed with 4 AND gates (see Fig.~\ref{fig:applications:spn:midori-sbox}), or one 4-bit projection gate.

\subsection{Piccolo}
The key schedule for Piccolo-80 and Piccolo-128 can be computed using only XOR gates after the key bits are composed to 4-bit wires.

Piccolo's data path applies the 16-bit function $F$ two times per round to half of the state. This function $F$ is composed of a parallel application of 4 4-bit S-boxes, followed by a mixing matrix multiplication, followed by another parallel application of 4 4-bit S-boxes.
\[\scriptsize
	F(s_0, s_1, s_2, s_3) = \mathcal{S} \left(
	\begin{pmatrix}
		2311\\
		1231\\
		1123\\
		3112
	\end{pmatrix}
	\mathcal{S}\left(
	\begin{pmatrix}
		s_0 \\s_1 \\s_2\\s_3
	\end{pmatrix}\right)
	\right)\enspace,
\]
where the function $\mathcal{S}$ applies the 4-bit S-box $S$ element-wise
\[\scriptsize
	\mathcal{S}\left(
	\begin{pmatrix}
		s_0 \\s_1 \\s_2\\s_3
	\end{pmatrix}\right)
	=
	\begin{pmatrix}
		S(s_0) \\S(s_1) \\S(s_2)\\S(s_3)
	\end{pmatrix}\,.
\]
The mixing matrix encodes multiplications with elements in the finite field $\mathbb{GF}(2^4)$ with the irreducible polynomial $x^4 + x +1$.
Clearly, Piccolo doesn't have the property of a binary mixing matrix. However, we can still provide an implementation with projection gates at additional cost.

We re-write the function $F$ as
\[\scriptsize
	F'(s_0, s_1, s_2, s_3) = \mathcal{S} \left(
		\begin{pmatrix}
			0111\\
			1011\\
			1101\\
			1110
		\end{pmatrix}
		\begin{pmatrix}
			f(s_0)\\f(s_1)\\f(s_2)\\f(s_3)
		\end{pmatrix}
		\oplus
		\begin{pmatrix}
			1100\\
			0110\\
			0011\\
			1001
		\end{pmatrix}
		\begin{pmatrix}
			g(s_0)\\g(s_1)\\g(s_2)\\g(s_3)
		\end{pmatrix}
	\right)
\]
where $f(s) = S(s)$ and $g(s) = 2S(s)$. Subsequently, we compute $f$, $g$ and the remaining S-box layer $\mathcal{S}$ via 4-bit projection gates. Therefore, $F'$ can be computed with $4+4+4 = 12$ 4-bit projection gates. This re-writing is correct because $f(s) \oplus g(s) = 3S(s)$ w.r.t $\mathbb{GF}(2^4)$.

\subsection{SKINNY}
The SKINNY cipher family comprises three tweakey (i.e., public tweak concatenated with secret key) sizes, 64, 128 and 192 bit, of which we include the
size 128-bit here. 
The key schedule for SKINNY-64-128 also includes the application of a linear feedback shift register (LFSR) to 8 per round. This LFSR is implemented with a 4-bit projection gate.

The SKINNY data path contains 16 4-bit S-boxes per round. Each S-box is implemented with 4 AND gates using the formula from~\cite{Beierle2016}, or one 4-bit projection gate.

\subsection{TWINE}
\label{sec:appendix:ciphers:twine}
The key bits are first composed into 4-bit wires. The key schedule is linear except for 2 and 3 S-box computations per round for TWINE-80 and TWINE-128, respectively. In total, the key schedule comprises 35 rounds with S-box computation for both TWINE-80 and TWINE-128.

The data path is the same for TWINE-80 and TWINE-128 and contains 8 S-boxes per round in 36 rounds. The S-box can be computed with 6 AND gates (see Fig.~\ref{fig:applications:spn:twine-sbox}), or one 4-bit projection gate.

\subsection{WAGE}
The internal state of the WAGE permutation is represented as 37 7-bit cells. We load the initial state by computing the 7-bit wire composition for all bits.

We write $s_i$ to denote the $i$-th 7-bit cell and $s_i'$ to denote the updated $i$-th 7-bit cell. The internal state is updated 111 times in the following procedure:
\[\scriptsize
	\begin{array}{ll}
		\mathsf{fb} \gets & \textsf{WGP}(s_{36}) \oplus s_{31} \oplus s_{30} \oplus s_{26} \oplus s_{24} \oplus s_{19} \oplus s_{13} \oplus s_{12} \\
        & \quad\oplus s_8 \oplus s_6 \oplus \textsf{Dbl}(s_0) \\
		s_5 \gets & s_5 \oplus \textsf{SB}(s_8) \\
		s_{11} \gets & s_{11} \oplus \textsf{SB}(s_{15}) \\
		s_{19} \gets & s_{19} \oplus \textsf{WGP}(s_{18}) \oplus rc_0 \\
		s_{24} \gets & s_{24} \oplus \textsf{SB}(s_{27}) \\
		s_{30} \gets & s_{30} \oplus \textsf{SB}(s_{34}) \\
		s_j' \gets & s_{j+1},\, 0 \le j \le 35 \\
		s_{36}' \gets & \mathsf{fb}\enspace.
    \end{array}
\]
The 7-bit functions \textsf{WGP}, \textsf{Dbl} and \textsf{SB} denote a Welch-Gong permutation, finite field doubling and a lightweight 7-bit S-box. All three are implemented using a 7-bit projection gate.
Further, $rc_0$ is a round-dependent constant.}

\bibliographystyle{IEEEtran}
\bibliography{IEEEabrv,bibliography}

\end{document}